\newcommand{\defeq}{\; {\overset{\text{def}}=} \;}
\DeclareMathOperator{\Ad}{Ad}
\DeclareMathOperator{\ad}{ad}
\newcommand{\calB}{{\mathcal B}}
\newcommand{\calF}{{\mathcal F}}
\newcommand{\calL}{{\mathcal L}}
\newcommand{\calM}{{\mathcal M}}
\newcommand{\calP}{{\mathcal P}}
\newcommand{\calQ}{{\mathcal Q}}
\newcommand{\calX}{{\mathcal X}}
\newcommand{\commentout}[1]{}
\newcommand{\der}{\partial}
\DeclareMathOperator{\ord}{ord}
\DeclareMathOperator{\ordh}{ord^{\hbar}}
\DeclareMathOperator{\ordx}{ord_\xi}
\DeclareMathOperator{\Res}{Res}
\newcommand{\symh}{\sigma^\hbar}
\newcommand{\totsym}{\sigma_{\mathrm{tot}}}
\newtheorem{thm}{Theorem}[section]
\newtheorem{lem}[thm]{Lemma}
\newtheorem{prop}[thm]{Proposition}
\theoremstyle{definition}
\theoremstyle{remark}
\newtheorem{rem}[thm]{Remark}
\numberwithin{equation}{section}
\newcommand\thmref[1]{Theorem~\ref{#1}}
\newcommand\propref[1]{Proposition~\ref{#1}}
\newcommand\secref[1]{Section~\ref{#1}}
\newcommand\appref[1]{Appendix~\ref{#1}}
\newcommand\lemref[1]{Lemma~\ref{#1}}
\begin{document}
\title[$\hbar$-expansion of KP]{
       $\hbar$-expansion of KP hierarchy:\\
       recursive construction of solutions
}
\author{Kanehisa Takasaki}
\address{
Graduate School of Human and Environmental Studies,
Kyoto University,
Yoshida, Sakyo, Kyoto, 606-8501, Japan
}
\email{takasaki@math.h.kyoto-u.ac.jp}
\author{Takashi Takebe}
\address{
Faculty of Mathematics,
State University -- Higher School of Economics,
Vavilova Street, 7, Moscow, 117312, Russia
}
\email{ttakebe@hse.ru}
\date{24 December 2009; revised on 5 September 2010}
%\date{22 December 2009}
%\date{7 December 2009}
%\date{25 November 2009}
%\date{9 November 2009 (20th anniversary of Breakdown of the Berliner Mauer)}
%
%
\maketitle
\begin{abstract}
The $\hbar$-dependent KP hierarchy is a formulation of 
the KP hierarchy that depends on the Planck constant $\hbar$ 
and reduces to the dispersionless KP hierarchy as $\hbar \to 0$.  
%This paper presents a recursive construction of solutions 
%A recursive construction of solutions 
%of the $\hbar$-dependent KP hierarchy on the basis of 
A recursive construction of its solutions on the basis of 
a Riemann-Hilbert problem for the pair $(L,M)$ of 
Lax and Orlov-Schulman operators
is presented.  %To this end, 
The Riemann-Hilbert problem is converted to a set %a large set 
of recursion relations for the coefficients $X_n$ 
of an $\hbar$-expansion of the operator 
$X = X_0 + \hbar X_1 + \hbar^2X_2 + \cdots$ 
for which the dressing operator $W$ is expressed 
in the exponential form $W = \exp(X/\hbar)$.  
Given the lowest order term $X_0$, %which is related 
%to the dispersionless KP hierarchy, 
one can solve 
the recursion relations to obtain the higher order terms. 
The wave function $\Psi$ associated with $W$ turns out 
to have the WKB form $\Psi = \exp(S/\hbar)$, 
and the coefficients $S_n$ of the $\hbar$-expansion 
$S = S_0 + \hbar S_1 + \hbar^2 S_2 + \cdots$, 
too, are determined by a set of recursion relations. 
This WKB form is used to show that the associated 
tau function has an $\hbar$-expansion of the form 
$\log\tau = \hbar^{-2}F_0 + \hbar^{-1}F_1 + F_2 + \cdots$. 

\end{abstract}

\setcounter{section}{-1}
\section{Introduction}
\label{sec:intro}

The KP hierarchy can be completely solved by 
several methods. The most classical methods are 
based on Grassmann manifolds \cite{sat-sat:82}, \cite{seg-wil:85},
fermions and vertex operators \cite{djkm} and 
factorisation of microdifferential operators \cite{mul:84}.
Unfortunately, those methods are not very suited for 
a ``quasi-classical'' ($\hbar$-dependent, where $\hbar$ 
is the Planck constant) formulation \cite{tak-tak:95} 
of the KP hierarchy.  

The $\hbar$-dependent formulation of the KP hierarchy 
was introduced to study the dispersionless KP hierarchy 
\cite{kod-gib}, \cite{kri:91}, \cite{tak-tak:91} as a classical limit 
(i.e., the lowest order of the $\hbar$-expansion) 
of the KP hierarchy.   This point of view turned out 
to be very useful for understanding various features 
of the dispersionless KP hierarchy such as 
Lax equations, Hirota equations, infinite dimensional 
symmetries, etc., in the light of the KP hierarchy.  
In this paper, we return to the $\hbar$-dependent 
KP hierarchy itself, and consider all orders 
of the $\hbar$-expansion.  

We first address the issue of solving a Riemann-Hilbert problem for the
pair $(L,M)$ of Lax and Orlov-Schulman operators \cite{orl-sch}. This
is a kind of ``quantisation'' of a Riemann-Hilbert problem that solves
the dispersionless KP hierarchy \cite{tak-tak:91}.  Though the
Riemann-Hilbert problem for the full KP hierarchy was formulated in our
previous work \cite{tak-tak:95}, we did not consider the existence of
its solution in a general setting.  In this paper, we settle this issue
by an $\hbar$-expansion of the dressing operator $W$, which is assumed
to have the exponential form $W = \exp(X/\hbar)$ with an operator $X$ of
negative order.  Roughly speaking, the coefficients $X_n$, $n =
0,1,2,\ldots$, of the $\hbar$-expansion of $X$ are shown to be
determined recursively from the lowest order term $X_0$ (in other words,
from a solution of the dispersionless KP hierarchy). % More precisely, we
%need many auxiliary quantities other than $X_n$, for which we can derive
%a set of recursion relations.  Thus our construction is
%essentially recursive.

We next convert this result to the language of 
the wave function $\Psi$.  This, too, is to answer a problem 
overlooked in our previous paper \cite{tak-tak:95}.   
Namely, given the dressing operator 
in the exponential form $W = \exp (X/\hbar)$, 
we show that the associated wave function has the WKB form 
$\Psi=\exp(S/\hbar)$ with a phase function $S$ expanded 
into nonnegative powers of $\hbar$.   This is genuinely 
a problem of calculus of microdifferential operators 
rather than that of the KP hierarchy.  A simplest example 
such as $X=x(\hbar\der)^{-1}$ demonstrates that this problem 
is by no means trivial.  Borrowing an idea from Aoki's 
``exponential calculus'' of microdifferential operators \cite{aok:86}, 
we show that dressing operators of the form $W = \exp(X/\hbar)$ 
and wave functions of the form $\Psi = \exp(S/\hbar)$ 
are determined from each other by a set of recursion relations 
for the coefficients of their $\hbar$-expansion. 
%%%%%
More precisely, we need many auxiliary quantities other than $X$ and
$S$, for which we can derive a large set of recursion relations. Thus
our construction is essentially recursive.
%%%%%
Consequently, 
the wave function of the solution of the aforementioned 
Riemann-Hilbert problem, too, are recursively determined 
by the $\hbar$-expansion.  

Having the $\hbar$-expansion of the wave function, 
we can readily derive an $\hbar$-expansion of the tau function 
as stated in our previous work \cite{tak-tak:95}.  
This $\hbar$-expansion is a generalisation of 
the ``genus expansion'' of partition functions in string theories 
and random matrices \cite{dij:91}, \cite{kri:91}, \cite{mor:94},
\cite{dfgz}.

This paper is organised as follows.  
Section 1 is a review of the $\hbar$-dependent formulation 
of the KP hierarchy.  Relevant Riemann-Hilbert problems 
are also reviewed here.  
Section 2 presents the recursive solution of 
the Riemann-Hilbert problem.  A technical clue is 
the Campbell-Hausdorff formula, details of which 
are collected in Appendix A.  The construction 
of solution is illustrated for the case of 
the Kontsevich model \cite{adl-vaM:92} in Appendix B. 
Section 3 deals with the $\hbar$-expansion of the wave function. 
Aoki's exponential calculus is also briefly reviewed here.  
Section 4 mentions the $\hbar$-expansion of the tau function. 
Section 5 is devoted to concluding remarks. 

\bigskip

%{\em Acknowledgements}:
\paragraph*{\em Acknowledgements}

The authors are grateful to Professor Akihiro Tsuchiya 
for drawing our attention to this subject and to 
Professor Masatoshi Noumi for instructing how to use 
Mathematica in algebra of microdifferential operators.  
Computations in Appendix B were done with the aid of 
one of his Mathematica programmes.  
This work is partly supported by Grants-in-Aid for 
Scientific Research No.\ 19540179 and No.\ 22540186 from the Japan
Society for the Promotion of Science.  
TT is partly supported by the grant of the State University -- Higher
School of Economics, Russia, for the Individual Research Project
09-01-0047 (2009).

\section{$\hbar$-dependent KP hierarchy: review}
\label{sec:dkp}

In this section we recall several facts on the KP hierarchy depending on
a formal parameter $\hbar$ in \cite{tak-tak:95}, \S1.7.

The $\hbar$-dependent KP hierarchy is defined by the Lax representation
\begin{equation}
     \hbar \frac{\der L}{\der t_n} = [ B_n, L ], \qquad
    B_n = (L^n)_{\geq 0}, \qquad n=1,2,\ldots,
\label{kph}
\end{equation}
where the {\em Lax operator} $L$ is a microdifferential operator of the
form
\begin{equation}
    L = \hbar \der
      + \sum_{n=1}^\infty u_{n+1}(\hbar,x,t)(\hbar\der)^{-n}, \qquad
    \der = \frac{\der}{\der x},
\label{L}
\end{equation}
and ``$(\quad)_{\geq 0}$'' stands for the projection onto a differential
operator dropping negative powers of $\der$. The coefficients
$u_n(\hbar,x,t)$ of $L$ are assumed to be formally regular with respect
to $\hbar$. This means that they have an asymptotic expansion of the
form $u_n(\hbar,x,t) = \sum_{m=0}^\infty \hbar^m u_n^{(m)}(x,t)$ as $\hbar
\to 0$.

We need two kinds of ``order'' of microdifferential operators: one is
the ordinary order,
\begin{equation}
    \ord \left( \sum a_{n,m}(x,t) \hbar^n \der^m \right)
    \defeq
    \max \left\{ m 
       \,\left|\, \sum_{n} a_{n,m}(x,t)\hbar^n \neq 0 
         \right.\right\},
\label{def:ord}
\end{equation}
and the other is the $\hbar$-order defined by
\begin{equation}
    \ordh \left( \sum a_{n,m}(x,t) \hbar^n \der^m \right)
    \defeq
    \max \{ m-n \,|\, a_{n,m}(x,t) \neq 0 \}.
\label{def:ordh}
\end{equation}
In particular, $\ordh\hbar=-1$, $\ordh\der=1$, $\ordh\hbar\der=0$. For
example, the condition which we imposed on the coefficients
$u_n(\hbar,x,t)$ can be restated as $\ordh(L) = 0$.

The {\em principal symbol} (resp.\ the {\em symbol of order $l$})
of a microdifferential operator $A = \sum a_{n,m}(x,t)\hbar^n \der^m$ with
respect to the $\hbar$-order is
\begin{align}
    \symh(A)
    &\defeq \sum_{m-n = \ord(A)} a_{n,m}(x,t) \xi^m 
\label{def:pr-symbol}
    \\
    (\text{resp. } 
    \symh_l(A)
    &\defeq \sum_{m-n = l} a_{n,m}(x,t) \xi^m).
\label{def:symbol}
\end{align}
When it is clear from the context, we sometimes use $\symh$ instead
of $\symh_l$.

\begin{rem}
This ``order'' coincides with the order of an microdifferential
operator if we formally replace $\hbar$ with $\der_{t_0}^{-1}$, where
$t_0$ is an extra variable. In fact, naively extending \eqref{kph} to
$n=0$, we can introduce the time variable $t_0$ on which nothing
depends. See also \cite{kas-rou:08}.
\end{rem}

As in the usual KP theory, the Lax operator $L$ is expressed by a {\em
dressing operator} $W$:
\begin{equation}
    L = \Ad W (\hbar\der) = W (\hbar\der) W^{-1}
\label{L=Ad(W)d}
\end{equation}
The dressing operator $W$ should have a specific form:
\begin{gather}
    W = \exp( \hbar^{-1} X(\hbar,x,t,\hbar\der) )
        (\hbar\der)^{\alpha(\hbar)/\hbar},
\label{W=exp(X)}
\\
    X(\hbar,x,t,\hbar\der) 
    = \sum_{k=1}^\infty \chi_k(\hbar,x, t) (\hbar\der)^{-k},
\label{X}
\\
    \ordh (X(\hbar,x, t, \hbar\der)) = \ordh \alpha(\hbar) = 0,
\label{ordX=0}
\end{gather}
and $\alpha(\hbar)$ is a constant with respect to $x$ and $t$. (In
\cite{tak-tak:95} we did not introduce $\alpha$, which will be necessary
in \secref{sec:recursion}.)

The {\em wave function} $\Psi(\hbar,x,t;z)$ is defined by
\begin{equation}
    \Psi(\hbar,x,t;z) 
    = W e^{(xz + \zeta(t,z))/\hbar},
\label{def:wave-func}
\end{equation}
where $\zeta(t,z)=\sum_{n=1}^\infty t_n z^n$. It is a solution of
linear equations
\begin{equation*}
    L \Psi = z \Psi, \qquad 
    \hbar\frac{\der \Psi}{\der t_n} = B_n \Psi \quad (n=1,2,\dotsc),
\end{equation*}
and has the WKB form \eqref{wave-func}, as we shall show in
\secref{sec:wave-function}. Moreover it is expressed by means of the
{\em tau function} $\tau(\hbar,t)$ as follows:
\begin{equation}
    \Psi(\hbar,x,t;z) =
    \frac{\tau(t+x-\hbar[z^{-1}])}{\tau(t)}
    e^{\hbar^{-1}\zeta(t,z)},
\label{tau/tau}
\end{equation}
where $t+x=(t_1+x,t_2,t_3,\dotsc)$ and
$[z^{-1}]=(1/z,1/2z^2,1/3z^3,\dots)$. We shall study the
$\hbar$-expansion of the tau function in \secref{sec:tau-function}.

The {\em Orlov-Schulman operator} $M$ is defined by 
\begin{equation}
    M = 
    \Ad \left(
          W \exp\left(\hbar^{-1} \zeta(t,\hbar\der) \right)
        \right)x 
    =
    W \left( \sum_{n=1}^\infty n t_n (\hbar \der)^{n-1} + x \right)
    W^{-1}
\label{def:M}
\end{equation}
where $\zeta(t,\hbar\der) = \sum_{n=1}^\infty t_n (\hbar\der)^n$. It is easy
to see that $M$ has a form
\begin{equation}
    M = \sum_{n=1}^\infty n t_n L^{n-1} + x + \alpha(\hbar) L^{-1} +
        \sum_{n=1}^\infty v_n(\hbar,x, t) L^{-n-1},
\label{M}
\end{equation}
and satisfies the following properties:
\begin{itemize}
\item $\ordh(M) = 0$;
\item the canonical commutation relation: $[L, M] = \hbar$;
\item the same Lax equations as $L$:
\begin{equation}
    \hbar \frac{\der M}{\der t_n} = [ B_n, M ],\quad
    n=1,2,\ldots.
\label{lax:M}
\end{equation}
\item another linear equation for the wave function $\Psi$:
\begin{equation*}
    M \Psi = \hbar\frac{\der \Psi}{\der z}.
\end{equation*}
\end{itemize}

\begin{rem}
\label{rem:alpha=const}
 If an operator $M$ of the form \eqref{M} satisfies the Lax equations
 \eqref{lax:M} and the canonical commutation relation $[L,M]=\hbar$ with
 the Lax operator $L$ of the KP hierarchy, then $\alpha(\hbar)$ in the
 expansion \eqref{M} does not depend on any $t_n$ nor on $x$. In
 fact, expanding the canonical commutation relation, we have
\[
    \hbar + \hbar\frac{\der\alpha}{\der x} (\hbar\der)^{-1} 
          + (\text{lower order terms}) 
    = \hbar,
\]
 which implies $\frac{\der\alpha}{\der x}=0$. Similarly, from
 \eqref{lax:M} follows $\frac{\der\alpha}{\der t_n}=0$ with the help of
 \eqref{kph} and $[L^n, M] = n\hbar L^{n-1}$.
%%%
\commentout{
The left hand side of
 \eqref{lax:M} is expanded as
\begin{multline}
    n \hbar L^{n-1}
    + \sum_{m=1}^\infty mt_m \hbar \frac{\der L^{m-1}}{\der t_n}
    + \hbar \frac{\der\alpha}{\der t_n} L^{-1} 
    + \alpha \hbar \frac{\der L^{-1}}{\der t_n}\\
    + \sum_{m=1}^\infty \hbar \frac{\der v_m}{\der t_n} L^{-m-1}
    + \sum_{m=1}^\infty v_m \hbar \frac{\der L^{-m-1}}{\der t_n},
\label{dM/dt}
\end{multline}
 whereas the right hand side is 
\begin{multline}
    \sum_{m=1}^\infty m t_m [B_n, L^{n-1}] + [B_n, x]
    + [B_n, \alpha] L^{-1} + \alpha [B_n, L^{-1}]\\
    + \sum_{n=1}^\infty [B_n, v_n] L^{-n-1}
    + \sum_{n=1}^\infty v_n [B_n, L^{-n-1}].
\label{[Bn,M]}
\end{multline}
 Rewriting \eqref{[Bn,M]} by the Lax equations \eqref{kph} and comparing
 it with \eqref{dM/dt}, we have
\begin{equation}
 \begin{split}
    &n L^{n-1} + \hbar \frac{\der\alpha}{\der t_n} L^{-1} 
    + \sum_{m=1}^\infty \hbar \frac{\der v_m}{\der t_n} L^{-m-1}
\\
    =&
    [B_n, x] 
    + [B_n, \alpha] L^{-1}
    + \sum_{n=1}^\infty [B_n, v_n] L^{-n-1}.
 \end{split}
\label{dM/dt=[Bn,M]:temp}
\end{equation}
 Expanding $n\hbar L^{n-1}=[L^n,M]$ by \eqref{M} and
 substituting it to \eqref{dM/dt=[Bn,M]:temp}, we have
\begin{multline*}
    \hbar \frac{\der\alpha}{\der t_n} L^{-1} 
    + \sum_{m=1}^\infty \hbar \frac{\der v_m}{\der t_n} L^{-m-1}
\\
    =
    [-(L^n)_{<0}, x]
    + [-(L^n)_{<0}, \alpha] L^{-1}
    + \sum_{n=1}^\infty [-(L^n)_{<0}, v_n] L^{-n-1}.
\end{multline*}
 where $(\quad)_{<0}$ is the projection to the negative order part. The
 terms of order $-1$ in this equation are
\begin{equation*}
    \hbar \frac{\der \alpha}{\der t_n} (\hbar\der)^{-1} = 0,
\end{equation*}
 which proves that $\alpha$ does not depend on $t_n$.
\hfill
$\square$
}% end of commentout
\end{rem}

The following proposition (Proposition 1.7.11 of \cite{tak-tak:95}) is a
``dispersionful'' counterpart of the theorem for the dispersionless KP
hierarchy found earlier (Proposition 7 of \cite{tak-tak:91}; cf.\
\propref{prop:dRH} below).

\begin{prop}
\label{prop:RH}
(i)
 Suppose that operators $f(\hbar,x,\hbar\der)$, $g(\hbar,x,\hbar\der)$,
 $L$ and $M$ satisfy the following conditions:
\begin{itemize}
 \item $\ordh f = \ordh g = 0$, $[f,g]=\hbar$;
 \item $L$ is of the form \eqref{L} and $M$ is of the form \eqref{M},
       $[L,M]=\hbar$;
 \item $f(\hbar,M,L)$ and $g(\hbar,M,L)$ are differential operators:
\begin{equation}
    (f(\hbar,M,L))_{<0} = (g(\hbar,M,L))_{<0} = 0,
\label{f(M,L)=g(M,L)=diff}
\end{equation}
       where $(\quad)_{<0}$ is the projection to the negative order
       part: $P_{<0}:=P-P_{\geq 0}$.
\end{itemize}
 Then $L$ is a solution of the KP hierarchy \eqref{kph} and $M$ is the
 corresponding Orlov-Schulman operator.

(ii)
 Conversely, for any solution $(L,M)$ of the $\hbar$-dependent KP
 hierarchy there exists a pair $(f,g)$ satisfying the conditions in (i).
\end{prop}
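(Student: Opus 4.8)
The plan is to reduce both parts to the dressing operator. From the form \eqref{L} of $L$ (and $\ordh L=0$) one obtains a dressing operator $W$ of the form \eqref{W=exp(X)}; with $U\defeq W\exp(\hbar^{-1}\zeta(t,\hbar\der))$ one has $L=\Ad U(\hbar\der)$, and -- using the form \eqref{M} of $M$ together with $[L,M]=\hbar$ (which forces $\Ad U^{-1}(M)-x$ to commute with $\hbar\der$ and to be of negative order) and the gauge freedom $W\mapsto WC$ with $C$ an $x$-independent operator -- one arranges $M=\Ad U(x)$ as well. Since $\Ad U$ is an algebra homomorphism sending $(\hbar\der,x)$ to $(L,M)$, the condition \eqref{f(M,L)=g(M,L)=diff} becomes: $\Ad U(f)$ and $\Ad U(g)$ are differential operators. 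Everything then comes down to proving that $U$ satisfies the Sato equations $\hbar\,\der_{t_n}U=(L^n)_{\ge0}\,U$, which is equivalent to the Lax equations \eqref{kph} for $L$, to the equations \eqref{lax:M} for $M$, and -- in view of the form \eqref{M} -- to $M$ being the Orlov--Schulman operator of $L$.

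For part (i), set $C_n\defeq\hbar\,\der_{t_n}U\cdot U^{-1}$. Since $\alpha(\hbar)$ is independent of $t$, one has $C_n=L^n+A_n$ with $A_n\defeq\hbar(\der_{t_n}W)W^{-1}$ of ordinary order $\le-1$, so $D_n\defeq C_n-(L^n)_{\ge0}=(L^n)_{<0}+A_n$ also has ordinary order $\le-1$; the goal is $D_n=0$. Differentiating the differential operator $\Ad U(f)=f(\hbar,M,L)$ by $t_n$ gives $\hbar\,\der_{t_n}(f(\hbar,M,L))=[C_n,f(\hbar,M,L)]$, whose left side is a differential operator; since $[(L^n)_{\ge0},f(\hbar,M,L)]$ is a differential operator too, so is $[D_n,f(\hbar,M,L)]$, and likewise $[D_n,g(\hbar,M,L)]$. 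Thus the remaining task is: show that an operator $D$ of ordinary order $\le-1$ with $[D,P]$ and $[D,Q]$ differential operators, where $[P,Q]=\hbar$ and $\ordh P=\ordh Q=0$, must vanish.

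For that last step I would use $[f,g]=\hbar$ to express $x$ and $\hbar\der$ as microdifferential-operator functions of $f$ and $g$; applying $\Ad U$ turns $M$ and $L$ into those same functions of $P=f(\hbar,M,L)$ and $Q=g(\hbar,M,L)$. Because $\ad_D$ is a derivation carrying $P,Q$ into differential operators, and tracking the ordinary order and the $\hbar$-order (with the principal symbols \eqref{def:pr-symbol}) through these expressions, one concludes that $[D,L]$ and $[D,M]$ are differential operators. But $[D,L]$ has order $\le(-1)+1-1=-1$, so $[D,L]=0$; then $D=\Ad U(c)$ with $c$ of order $\le-1$ commuting with $\hbar\der$, whence $[D,M]=\Ad U([c,x])$ has order $\le-2$ and is a differential operator, so $[D,M]=0$, which forces $c$ to be independent of $\hbar\der$, i.e.\ $D$ of order $0$; with $\ord D\le-1$ this gives $D=0$. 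The real content -- and the step I expect to be the main obstacle -- is transporting ``being a differential operator'' through the operator-function expressions for $L$ and $M$ while keeping control of the two notions of order (the naive chain rule introduces inverse operators, and the expression for the infinite-order operator $M$ must be handled carefully); this is the microdifferential counterpart of the symbol-and-Poisson-bracket argument that proves the analogous statement for the dispersionless KP hierarchy (\propref{prop:dRH}, cf.\ \cite{tak-tak:91}).

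For part (ii), start from a solution $(L,M)$, take a dressing operator $W$ satisfying the Sato equations, and put $U=W\exp(\hbar^{-1}\zeta(t,\hbar\der))$ and $W_0\defeq W|_{t=0}$. Set $f\defeq\Ad W_0^{-1}(\hbar\der)$ and $g\defeq\Ad W_0^{-1}(x)$: these are microdifferential operators in $x,\hbar\der$, independent of $t$; conjugation by $W_0$ (of the form \eqref{W=exp(X)} with $\ordh X_0=\ordh\alpha=0$) preserves the $\hbar$-order, so $\ordh f=\ordh g=0$; and $[f,g]=\Ad W_0^{-1}([\hbar\der,x])=\Ad W_0^{-1}(\hbar)=\hbar$. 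Moreover $f(\hbar,M,L)=\Ad U(f)=\Ad(UW_0^{-1})(\hbar\der)$ and $g(\hbar,M,L)=\Ad(UW_0^{-1})(x)$, and $V\defeq UW_0^{-1}=W\exp(\hbar^{-1}\zeta(t,\hbar\der))W_0^{-1}$ satisfies $\hbar\,\der_{t_n}V=(L^n)_{\ge0}\,V$ with $V|_{t=0}=1$, as a direct consequence of the Sato equations for $W$; hence its negative-order part obeys a homogeneous linear equation with zero initial value and vanishes identically (formally in $t$), so $V$ -- and, by the same argument applied to $\hbar\,\der_{t_n}V^{-1}=-V^{-1}(L^n)_{\ge0}$, also $V^{-1}$ -- is a differential operator. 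Therefore $f(\hbar,M,L)=V(\hbar\der)V^{-1}$ and $g(\hbar,M,L)=VxV^{-1}$ are differential operators, which is \eqref{f(M,L)=g(M,L)=diff}, and $(f,g)$ is the required pair.
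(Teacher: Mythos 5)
The paper does not actually prove this proposition here (it is quoted as Proposition 1.7.11 of \cite{tak-tak:95}), so I compare your argument with the proof given there and with the parallel symbol arguments in \secref{sec:recursion}. Your part (ii) is correct as written, and your reduction of part (i) to showing $D_n=0$, where $D_n=\hbar\der_{t_n}U\cdot U^{-1}-B_n$ has ordinary order $\leq -1$ and $[D_n,P]$, $[D_n,Q]$ are differential operators for $P=f(\hbar,M,L)$, $Q=g(\hbar,M,L)$, is the standard route. The gap is in your mechanism for that key step. You propose to invert the quantised canonical transformation, write $L$ and $M$ as operator functions of $P$ and $Q$, and transport the differential-operator property through these expressions via the derivation $\ad_D$. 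This does not work: the intermediate expressions are not differential operators, so the terms produced by the Leibniz rule are not individually differential operators and their sum cannot be controlled term by term. The Kontsevich data \eqref{konts:f,g} make this concrete: $P=L^2$ is a differential operator while $L=P^{1/2}$ is not, and from $[D,P]=[D,L]L+L[D,L]$ being a differential operator (necessarily of order $\leq 0$, hence a multiplication operator) one cannot conclude anything about $[D,L]$ without coupling in $Q$. Worse, since $\ord[D,L]\leq -1$, the assertion ``$[D,L]$ is a differential operator'' is already equivalent to $[D,L]=0$, i.e.\ to essentially the whole content of the lemma, so your sketch assumes what is to be proved.

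The argument that works couples $P$ and $Q$ at the level of $\hbar$-principal symbols, exactly as in \eqref{Pii,Qii:mat}--\eqref{alpha,X:mat} and \lemref{lem:compatibility}. Suppose $D\neq 0$ and put $\mathcal{D}=\symh(D)$, a series with $\ordx\mathcal{D}\leq -1$. The brackets $\{\mathcal{D},\calP_0\}$ and $\{\mathcal{D},\calQ_0\}$ are symbols of the differential operators $[D,P]$ and $[D,Q]$ and hence contain no negative powers of $\xi$. Since $\{\calP_0,\calQ_0\}=1$ (the principal symbol of $[P,Q]=\hbar$), the $2\times 2$ matrix of first derivatives of $\calP_0,\calQ_0$ is invertible with entries polynomial in $\xi$, so $\der_\xi\mathcal{D}$ and $\der_x\mathcal{D}$ are simultaneously free of negative powers of $\xi$ and made up only of negative powers of $\xi$, hence both vanish; thus $\mathcal{D}$ is a constant of order $\leq -1$, i.e.\ $\mathcal{D}=0$, contradicting $D\neq 0$. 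With this lemma your part (i) closes; the rest of your outline (construction of $U$ with $L=\Ad U(\hbar\der)$, $M=\Ad U(x)$, and the vanishing of $V_{<0}$ in part (ii)) is fine.
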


The leading term of this system with respect to the $\hbar$-order gives
the {\em dispersionless KP hierarchy}. Namely,
\begin{equation}
    \calL := \symh(L)
    = \xi
    + \sum_{n=1}^\infty u_{0,n+1} \xi^{-n}, \qquad
    (u_{0,n+1} := \symh(u_{n+1}))
\label{calL}
\end{equation}
satisfies the dispersionless Lax type equations
\begin{equation}
    \frac{\der \calL}{\der t_n} = \{ \calB_n, \calL\}, \qquad
    \calB_n = (\calL^n)_{\geq 0}, \qquad n=1,2,\ldots,
\label{dkp}
\end{equation}
where $(\quad)_{\geq 0}$ is the truncation of Laurent series to its
polynomial part and $\{,\}$ is the Poisson bracket defined by
\begin{equation}
    \{a(x,\xi), b(x,\xi)\}
    =
    \frac{\der a}{\der \xi} \frac{\der b}{\der x}
    -
    \frac{\der a}{\der x} \frac{\der b}{\der \xi}.
\label{poisson}
\end{equation}

The dressing operation \eqref{L=Ad(W)d} for $L$ becomes the following
dressing operation for $\calL$:
\begin{equation}
 \begin{split}
    \calL &= \exp \bigl( \ad_{\{,\}} X_0 \bigr) 
            \exp \bigl( \ad_{\{,\}} \alpha_0 \log\xi \bigr) \xi
           = \exp \bigl( \ad_{\{,\}} X_0 \bigr) \xi,
\\
    X_0 &:= \symh(X), \ \alpha_0:=\symh(\alpha)
\end{split}
\label{calL=exp(adX)xi}
\end{equation}
where $\ad_{\{,\}} (f) (g):= \{f,g\}$. 

The principal symbol of the Orlov-Schulman operator is
\begin{equation}
    \calM = \sum_{n=1}^\infty nt_n \calL^{n-1} + x 
    + \alpha_0 \calL^{-1}
    + \sum_{n=1}^\infty v_{0,n} \calL^{-n-1}, \qquad
    v_{0,n}:=\symh(v_n),
\label{calM}
\end{equation}
which is equal to
\begin{equation}
    \calM = 
    \exp \bigl( \ad_{\{,\}} X_0 \bigr) 
    \exp \bigl( \ad_{\{,\}} \alpha_0 \log\xi \bigr)
    \exp \bigl( \ad_{\{,\}} \zeta(t,\xi)\bigr) x,
\label{calM=exp(adX)x}
\end{equation}
where $\zeta(t,\xi) = \sum_{n=1}^\infty t_n \xi^n$. The series $\calM$
satisfies the canonical commutation relation with $\calL$,
$\{\calL,\calM\}=1$ and the Lax type equations:
\begin{equation}
    \frac{\der\calM}{\der t_n} = \{\calB_n, \calM\}, \qquad
    n=1,2,\ldots.
\label{lax:calM}
\end{equation}

The Riemann-Hilbert type construction of the solution is essentially the
same as \propref{prop:RH}. (We do not need to assume the canonical
commutation relation $\{\calL,\calM\}=1$.)

\begin{prop}
\label{prop:dRH}
(i)
 Suppose that functions $f_0(x,\xi)$, $g_0(x,\xi)$, $\calL$ and $\calM$
 satisfy the following conditions:
\begin{itemize}
 \item $\{f_0,g_0\}=1$;
 \item $\calL$ is of the form \eqref{calL} and $\calM$ is of the form
       \eqref{calM}
 \item $f_0(\calM,\calL)$ and $g_0(\calM,\calL)$ do not contain negative
       powers of $\xi$
\begin{equation}
    (f_0(\calM,\calL))_{<0} = (g_0(\calM,\calL))_{<0} = 0,
\label{f0(M,L)=g0(M,L)=pos}
\end{equation}
       where $(\quad)_{<0}$ is the projection to the negative degree part:
       $P_{<0}:=P-P_{\geq 0}$.
\end{itemize}
 Then $\calL$ is a solution of the dispersionless KP hierarchy
 \eqref{dkp} and $\calM$ is the corresponding Orlov-Schulman function.

(ii)
 Conversely, for any solution $(\calL,\calM)$ of the dispersionless KP
 hierarchy, there exists a pair $(f_0,g_0)$ satisfying the conditions in
 (i). 
\end{prop}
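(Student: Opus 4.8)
The plan is to treat (i) and (ii) as (essentially) mutually inverse constructions, in the spirit of the dispersionful statement \propref{prop:RH} and its source in \cite{tak-tak:91}; the real content is in (i). Throughout, $\{\,,\,\}$ denotes the bracket \eqref{poisson}, $\calB_n=(\calL^n)_{\ge0}$, and $(\,\cdot\,)_{<0}$ the projection onto the negative-degree-in-$\xi$ part. The first thing I would record, and then use repeatedly, is that the hypotheses of (i) already force the canonical relation $\{\calL,\calM\}=1$. Indeed, applying the chain rule for the Poisson bracket in each slot,
\[
  \{f_0(\calM,\calL),g_0(\calM,\calL)\}
  =\bigl(\partial_1 f_0\,\partial_2 g_0-\partial_2 f_0\,\partial_1 g_0\bigr)(\calM,\calL)\,\{\calM,\calL\}
  =-\{f_0,g_0\}(\calM,\calL)\,\{\calM,\calL\}=\{\calL,\calM\}.
\]
The left-hand side is a Poisson bracket of two series with no negative powers of $\xi$, hence has no negative powers of $\xi$; on the other hand, a direct computation from \eqref{calL} and \eqref{calM}, using $\{\calL,\calL^k\}=0$, gives $\{\calL,\calM\}=\partial_\xi\calL\,\bigl(1+\sum_m\partial_x v_{0,m}\,\calL^{-m-1}\bigr)=1+(\text{terms of order}\le-2)$. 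Matching the two descriptions yields $\{\calL,\calM\}=1$.

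For the Lax equations in (i) I would introduce the defects $\calP_n:=\partial_{t_n}\calL-\{\calB_n,\calL\}$ and $\calQ_n:=\partial_{t_n}\calM-\{\calB_n,\calM\}$ and differentiate the twistor data along $t_n$. The chain rule gives
\[
  \bigl(\partial_{t_n}-\{\calB_n,\,\cdot\,\}\bigr)f_0(\calM,\calL)
  =(\partial_1 f_0)(\calM,\calL)\,\calQ_n+(\partial_2 f_0)(\calM,\calL)\,\calP_n=:\Phi_n,
\]
and likewise for $g_0$, producing $\Psi_n$. Since $f_0(\calM,\calL),g_0(\calM,\calL)$ have no negative powers of $\xi$ by \eqref{f0(M,L)=g0(M,L)=pos}, and both $\partial_{t_n}$ and $\{\calB_n,\,\cdot\,\}$ (with $\calB_n$ polynomial in $\xi$) preserve this property, the quantities $\Phi_n,\Psi_n$ have no negative powers of $\xi$. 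The coefficient matrix of this $2\times2$ system has determinant $\partial_1 f_0\,\partial_2 g_0-\partial_2 f_0\,\partial_1 g_0=-\{f_0,g_0\}=-1$, a unit, so $\calP_n$ and $\calQ_n$ can be solved for in terms of $\Phi_n,\Psi_n$ and the derivatives of $f_0,g_0$.

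The decisive step is an order count. First, $\calP_n$ has order $\le-1$, because the leading term $\xi$ of \eqref{calL} is $t$-independent and $\{\calB_n,\calL\}=-\{(\calL^n)_{<0},\calL\}$ has order $\le-1$. Next, $\{\calL,\calM\}=1$ gives $\{\calB_n,\calM\}=n\calL^{n-1}-\{(\calL^n)_{<0},\calM\}$, so that the term $n\calL^{n-1}$ cancels the analogous ``large'' term in $\partial_{t_n}\calM$ coming from $\partial_{t_n}\bigl(\sum_m m t_m\calL^{m-1}\bigr)$; collecting the rest yields a decomposition $\calQ_n=(\partial_\xi\calM/\partial_\xi\calL)\,\calP_n+R_n$ in which $R_n$, assembled from $\partial_\xi(\calL^n)_{<0}$ and the $t_n$-derivatives of the coefficients $v_{0,m}$, has order $\le-2$. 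Substituting this back into the two chain-rule identities and re-expressing $(\partial_i f_0)(\calM,\calL),(\partial_i g_0)(\calM,\calL)$ through the $x$- and $\xi$-derivatives of $f_0(\calM,\calL),g_0(\calM,\calL),\calL$ (again using $\{\calL,\calM\}=1$), elimination produces expressions of the form $R_n=\partial_\xi\calL\cdot(\text{a quantity with no negative powers of }\xi)$ and $\calP_n=\partial_\xi\calL\cdot\bigl((\text{a quantity with no negative powers of }\xi)+\partial_x\calL\,R_n\bigr)$. Since $\partial_\xi\calL=1+(\text{order}\le-2)$, the top $\xi$-degree of each right-hand side equals that of its no-negative-powers factor, which is $\ge0$ unless that factor vanishes; comparison with $\ord\calP_n\le-1$ and $\ord R_n\le-2$ forces those factors to vanish, hence $R_n=0$, then $\calP_n=0$, then $\calQ_n=0$. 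These are precisely \eqref{dkp} and \eqref{lax:calM}, and together with \eqref{calL}, \eqref{calM}, and $\{\calL,\calM\}=1$ --- which, once $\calL$ solves \eqref{dkp}, identifies $\calM$ with \eqref{calM=exp(adX)x} --- this says that $\calM$ is the Orlov--Schulman function of $\calL$.

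For the converse (ii), given a solution $(\calL,\calM)$ I would freeze the time at $t=0$: since $\calL(0)=\xi+O(\xi^{-1})$ and $\calM(0)=x+O(\xi^{-1})$, the map $(x,\xi)\mapsto(\calM(0),\calL(0))$ is formally invertible, so there are functions $\mathfrak f,\mathfrak g$ with $\mathfrak f(\calM(0),\calL(0))=x$ and $\mathfrak g(\calM(0),\calL(0))=\xi$; put $f_0:=\mathfrak f$, $g_0:=-\mathfrak g$. Then $\{f_0(\calM(0),\calL(0)),g_0(\calM(0),\calL(0))\}=\{x,-\xi\}=1$, and reading the chain-rule identity of the first paragraph backwards, together with $\{\calL(0),\calM(0)\}=1$ and the surjectivity of $(x,\xi)\mapsto(\calM(0),\calL(0))$, gives $\{f_0,g_0\}\equiv1$. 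Finally, the Lax equations \eqref{dkp}, \eqref{lax:calM} and the chain rule give $\partial_{t_n}f_0(\calM,\calL)=\{\calB_n,f_0(\calM,\calL)\}$ and likewise for $g_0$; since $f_0(\calM,\calL)$ and $g_0(\calM,\calL)$ equal $x$ and $-\xi$ at $t=0$ and the flow $P\mapsto\{\calB_n,P\}$ preserves the absence of negative powers of $\xi$, induction on the degree in the $t$-variables shows that \eqref{f0(M,L)=g0(M,L)=pos} holds for all $t$. The step I expect to be the genuine obstacle is the order bookkeeping of the third paragraph: it presupposes a workable notion of order for these formal objects (note that $\calM$ has unbounded $\xi$-degree once the explicit sum $\sum_m m t_m\calL^{m-1}$ is present, so the grading must be applied coefficient-wise in the $t$-variables), and the elimination has to be arranged so that the factors $\partial_\xi\calL=1+O(\xi^{-2})$ are exposed and the argument never silently invokes the Lax equations it is proving. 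The remaining auxiliary facts --- existence of the dispersionless dressing $\calL=\exp(\ad_{\{,\}}\varphi)\,\xi$ and the identification of $\calM$ with \eqref{calM=exp(adX)x} --- are routine given the construction already reviewed around \eqref{calL=exp(adX)xi}--\eqref{calM=exp(adX)x}.
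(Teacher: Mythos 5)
Your argument is sound, but note that the paper itself supplies no proof of \propref{prop:dRH}: it is quoted in the review section from Proposition 7 of \cite{tak-tak:91} (with the dispersionful analogue, Proposition 1.7.11 of \cite{tak-tak:95}, stated alongside), so there is no in-text proof to compare against. What you have written is essentially the standard Takasaki--Takebe argument, and the two steps that carry the real weight check out: (a) the derivation of $\{\calL,\calM\}=1$ by playing the chain-rule identity $\{f_0(\calM,\calL),g_0(\calM,\calL)\}=\{\calL,\calM\}$ against the direct expansion $\partial_\xi\calL\,(1+\sum_m \partial_x v_{0,m}\calL^{-m-1})=1+O(\xi^{-2})$, and (b) the decomposition $\calQ_n=(\partial_\xi\calM/\partial_\xi\calL)\calP_n+R_n$ with $\ord R_n\le-2$, which is correct precisely because $R_n$ is built from $\partial_\xi(\calL^n)_{<0}$ and $(\partial_{t_n}v_{0,m})\calL^{-m-1}$ after the $n\calL^{n-1}$ cancellation (had one instead used $\partial_\xi\calB_n$ directly the bound would fail). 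The elimination then does close: Cramer's rule with determinant $1$ gives $\partial_\xi\calL\cdot R_n=-\partial_\xi Q_0\,\Phi_n+\partial_\xi P_0\,\Psi_n$, a series with no negative powers equal to one of order $\le-2$, hence zero; and then $\calP_n=\partial_\xi\calL\,(\partial_x Q_0\,\Phi_n-\partial_x P_0\,\Psi_n)$ forces $\calP_n=0$ the same way. Two points deserve a sentence more than you give them: the identification of $\calM$ (now known to have the form \eqref{calM}, to Poisson-commute canonically with $\calL$, and to satisfy \eqref{lax:calM}) with the Orlov--Schulman function \eqref{calM=exp(adX)x}, and, in part (ii), the precise formal sense in which $f_0(\calM,\calL)$ is defined and the degree-in-$t$ induction showing that $\{\calB_n,\cdot\}$-propagation preserves the absence of negative powers. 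Both are routine but should be recorded.
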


If $f$, $g$, $L$ and $M$ are as in \propref{prop:RH}, then
$f_0=\symh(f)$, $g_0=\symh(g)$, $\calL=\symh(L)$ and
$\calM=\symh(M)$ satisfy the conditions in \propref{prop:dRH}. In other
words, $(f,g)$ and $(L,M)$ are quantisation of the canonical
transformations $(f_0,g_0)$ and $(\calL,\calM)$ respectively. (See, for
example, \cite{sch:85} for quantised canonical transformations.)

\section{Recursive construction of the dressing operator}
\label{sec:recursion}

In this section we prove that the solution of the KP hierarchy
corresponding to the quantised canonical transformation $(f,g)$ is
recursively constructed from its leading term, i.e., the solution of the
dispersionless KP hierarchy corresponding to the Riemann-Hilbert data
$(\symh(f),\symh(g))$.

Given the pair $(f,g)$, we have to construct the dressing operator $W$,
or $X$ and $\alpha$ in \eqref{W=exp(X)}, such that operators
\begin{equation}
 \begin{aligned}
    f(\hbar,M,L) &= 
    \Ad \left(
          W \exp\left(\hbar^{-1} \zeta(t,\hbar\der) \right)
        \right) f(\hbar,x,\hbar\der)
\\
    g(\hbar,M,L) &= 
    \Ad \left(
          W \exp\left(\hbar^{-1} \zeta(t,\hbar\der) \right)
        \right) g(\hbar,x,\hbar\der)
 \end{aligned}
\label{Ad(W)(f,g)}
\end{equation}
are both differential operators (cf.\ \propref{prop:RH}). Let us expand
$X$ and $\alpha$ with respect to the $\hbar$-order as follows:
\begin{align}
    X(\hbar,x,t,\hbar\der) 
    &= \sum_{n=0}^\infty \hbar^n X_n(x,t,\hbar\der),\quad
    X_n(x,t,\hbar\der) =
    \sum_{k=1}^\infty \chi_{n,k}(x,t) (\hbar\der)^{-k},
\label{X:h-expansion}
\\
    \alpha(\hbar) &= \sum_{n=0}^\infty \hbar^n \alpha_n,
\label{alpha:h-expansion}
\end{align}
where $\chi_{n,k}$ and $\alpha_n$ do not depend on $\hbar$ and hence
$\chi_k$ in \eqref{X} is expanded as
$\chi_k=\sum_{n=0}^\infty\hbar^n\chi_{n,k}$.

Assume that the solution of the dispersionless KP hierarchy
corresponding to $(\symh(f),\symh(g))$ is given. In other words,
assume that a symbol $X_0=\sum_{k=1}^\infty \chi_{0,k}(x,t) \xi^{-k}$
and a constant $\alpha_0$ are given such that
\begin{equation*}
 \begin{aligned}
    \symh(f)(\calL,\calM) &=
    \exp \bigl( \ad_{\{,\}} X_0 \bigr)
    \exp \bigl( \ad_{\{,\}} \alpha_0 \log \xi \bigr)
    \exp \bigl( \ad_{\{,\}} \zeta(t,\xi) \bigr)
    \symh(f)(x,\xi)
\\
    \symh(g)(\calL,\calM) &= 
    \exp \bigl( \ad_{\{,\}} X_0 \bigr)
    \exp \bigl( \ad_{\{,\}} \alpha_0 \log \xi \bigr)
    \exp \bigl( \ad_{\{,\}} \zeta(t,\xi) \bigr)
    \symh(g)(x,\xi)
 \end{aligned}
\end{equation*}
do not contain negative powers of $\xi$:
\begin{equation}
    \bigl(\symh(f)(\calL,\calM)\bigr)_{<0}=
    \bigl(\symh(g)(\calL,\calM)\bigr)_{<0}=0.
\label{sym(f,g)(L,M)-=0}
\end{equation}
(See \propref{prop:dRH}.)

We are to construct $X_n$ and $\alpha_n$ recursively, starting from
$X_0$ and $\alpha_0$. For this purpose expand $f(\hbar,L,M)$ and
$g(\hbar,L,M)$ in \eqref{Ad(W)(f,g)} as follows:
\begin{align}
    P&:= 
    \Ad \left( \exp (\hbar^{-1} X) 
               \exp(\hbar^{-1}\alpha\log\hbar\der) \right)
    f_t
\label{P}
\\
    &=
    P_0 + \hbar P_1 + \cdots + \hbar^k P_k + \cdots,
\nonumber
\\
    Q&:= 
    \Ad \left( \exp (\hbar^{-1} X) 
               \exp(\hbar^{-1}\alpha\log\hbar\der) \right)
    g_t
\label{Q}
\\
    &=
    Q_0 + \hbar Q_1 + \cdots + \hbar^k Q_k + \cdots,
\nonumber
\end{align}
where 
\begin{equation}
    f_t :=
    \Ad \left( e^{\hbar^{-1} \zeta(t,\hbar\der)} \right) f, \qquad
    g_t :=
    \Ad \left( e^{\hbar^{-1} \zeta(t,\hbar\der)} \right) g,
\label{ft,gt}
\end{equation}
and $P_i$'s and $Q_i$'s are operators of the form
$P_i=P_i(x,t,\hbar\der)$ and $Q_i=Q_i(x,t,\hbar\der)$ and $\ordh
P_i=\ordh Q_i =0$. Suppose that we have chosen $X_0,\dots,X_{i-1}$ and
$\alpha_0,\dots,\alpha_{i-1}$ so that $P_0,\dots,P_{i-1}$ and
$Q_0,\dots,Q_{i-1}$ do not contain negative powers of $\der$. If an
operator $X_i$ and a constant $\alpha_i$ are constructed from these
given $X_0,\dots,X_{i-1}$ and $\alpha_0,\dots,\alpha_{i-1}$ so that
resulting $P_i$ and $Q_i$ do not contain negative powers of $\der$, this
procedure gives recursive construction of $X$ and $\alpha$ in question.

We can construct such $X_i$ and $\alpha_i$ as follows. (Details and
meaning shall be explained in the proof of \thmref{thm:recursion:X}.):
\begin{itemize}
 \item (Step 0) Assume $X_0,\dots,X_{i-1}$ and
       $\alpha_0,\dots,\alpha_{i-1}$ are given and set
\begin{equation}
    X^{(i-1)} := \sum_{n=0}^{i-1} \hbar^n X_n, \qquad
    \alpha^{(i-1)} := \sum_{n=0}^{i-1} \hbar^n \alpha_n.
\label{X(i-1),a(i-1)}
\end{equation}
 \item (Step 1) Set
\begin{align}
    P^{(i-1)} 
    &:=
    \Ad \left(\exp \hbar^{-1} X^{(i-1)}\right) 
    \exp\left( \hbar^{-1} \alpha^{(i-1)} \log(\hbar\der)\right)
    f_t,
\label{def:Pi-1}
\\
    Q^{(i-1)}
    &:=
    \Ad \left(\exp \hbar^{-1} X^{(i-1)}\right) 
    \exp\left( \hbar^{-1} \alpha^{(i-1)} \log(\hbar\der)\right)
    g_t.
\label{def:Qi-1}
\end{align}
       Expand $P^{(i-1)}$ and $Q^{(i-1)}$ with respect to the
       $\hbar$-order as
\begin{align}
    P^{(i-1)} &=
    P^{(i-1)}_0 + \hbar P^{(i-1)}_1 + \cdots +
    \hbar^{k} P^{(i-1)}_k + \cdots
\label{Pi-1:h-expand}
\\
    Q^{(i-1)} &=
    Q^{(i-1)}_0 + \hbar Q^{(i-1)}_1 + \cdots +
    \hbar^{k} Q^{(i-1)}_k + \cdots.
\label{Qi-1:h-expand}
\end{align}
        ($\ordh P^{(i-1)}_k=\ordh Q^{(i-1)}_k=0$.)
 \item (Step 2) Put $\calP_0:=\symh(P^{(i-1)}_0)$,
       $\calQ_0:=\symh(Q^{(i-1)}_0)$,
       $\calP^{(i-1)}_i:=\symh(P^{(i-1)}_i)$,
       $\calQ^{(i-1)}_i:=\symh(Q^{(i-1)}_i)$ and define a constant
       $\alpha_i$ and a series $\tilde\calX_i(x,t,\xi)=\sum_{k=1}^\infty
       \tilde\chi_{i,k}(x,t)\xi^{-k}$ by
\begin{equation}
    \alpha_i \log\xi + \tilde\calX_i:=
    \int^\xi\left(
    \dfrac{\der \calQ_0}{\der \xi} \calP^{(i-1)}_i
    -
    \dfrac{\der \calP_0}{\der \xi} \calQ^{(i-1)}_i
    \right)_{\leq -1}
    d\xi.
\label{ai+tildeXi=int}
\end{equation}
       The integral constant of the indefinite integral is fixed so that
       the right hand side agrees with the left hand side.
 \item (Step 3) Define a series $\calX_i(x,t,\xi)=\sum_{k=1}^\infty
       \chi_{i,k}(x,t)\xi^{-k}$ by
\begin{equation}
 \begin{split}
    \calX_i
    &= \tilde \calX'_i
    - \frac{1}{2} \{\symh(X_0),\tilde\calX'_i\}
    + \sum_{p=1}^\infty K_{2p}
      (\ad_{\{,\}} (\symh(X_0)))^{2p} \tilde\calX'_i,
\\
    \tilde\calX'_i &:= 
    \alpha_i \log\xi + \tilde\calX_i(x,\xi)
    - \exp(\ad_{\{,\}} \symh(X_0)) (\alpha_i \log\xi).
 \end{split}
\label{tildeXi->Xi:symbol}
\end{equation}
       Here $K_{2p}$ is determined by the generating function
\begin{equation}
    \frac{z}{e^z-1}
    = 1 - \frac{z}{2} + \sum_{p=1}^\infty K_{2p} z^{2p},
\label{def:K2p}
\end{equation}
       or $K_{2p}= B_{2p}/(2p)!$, where $B_{2p}$'s are the Bernoulli
       numbers. 
 \item (Step 4) The operator $X_i(x,t,\hbar\der)$ is defined as the
       operator with the principal symbol $\calX_i$:
\begin{equation}
    X_i=\sum_{k=1}^\infty\chi_{i,k}(x,t)(\hbar\der)^{-k}.
\label{def:Xi}
\end{equation}
\end{itemize}

The main theorem is the following:
\begin{thm}
\label{thm:recursion:X}
 Assume that $X_0$ and $\alpha_0$ satisfy \eqref{sym(f,g)(L,M)-=0} and
 construct $X_i$'s and $\alpha_i$'s by the above procedure
 recursively. Then $X$ and $\alpha$ defined by \eqref{X:h-expansion}
 satisfy \eqref{f(M,L)=g(M,L)=diff}. Namely
 $W=\exp(X/\hbar)(\hbar\der)^{\alpha/\hbar}$ is a dressing operator of
 the $\hbar$-dependent KP hierarchy.
%
%\hfill
%$\square$
\end{thm}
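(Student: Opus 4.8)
The plan is to prove the theorem by induction on $i$, showing that the construction in Steps 0--4 forces the order-$(\leq i)$ terms of $P$ and $Q$ to be differential operators, i.e.\ to have no negative powers of $\der$. The inductive hypothesis is that $X_0,\dots,X_{i-1}$ and $\alpha_0,\dots,\alpha_{i-1}$ have already been chosen so that $P_0,\dots,P_{i-1}$ and $Q_0,\dots,Q_{i-1}$ contain no negative powers of $\der$; for the base case $i=0$, this is exactly the hypothesis \eqref{sym(f,g)(L,M)-=0} combined with \eqref{calL=exp(adX)xi}--\eqref{calM=exp(adX)x}, using that the principal symbol of a conjugation by $\exp(\hbar^{-1}X)\exp(\hbar^{-1}\alpha\log\hbar\der)$ is the corresponding $\exp(\ad_{\{,\}})$ action. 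The content of the induction step is that adding the new term $\hbar^i X_i$ to $X^{(i-1)}$ (and $\hbar^i\alpha_i$ to $\alpha^{(i-1)}$) changes $P_j$ and $Q_j$ only for $j\geq i$, and changes $P_i$, $Q_i$ precisely by the principal symbol of a commutator-type correction that can be tuned to cancel the negative-order part left over from the previous stage.

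First I would make precise how $X_i$ enters $P_i$. Writing $W^{(i)}=\exp(\hbar^{-1}X^{(i-1)}+\hbar^{i-1}X_i)\exp(\hbar^{-1}\alpha^{(i)}\log\hbar\der)$ and comparing with $W^{(i-1)}$, the Campbell--Hausdorff formula (Appendix A) shows that to the relevant order the extra conjugation multiplies $P^{(i-1)}$ on the left and right by $\exp(\pm\hbar^{i-1}\cdot(\text{something}))$; since $P^{(i-1)}_0,\dots,P^{(i-1)}_{i-1}$ already lie in the differential-operator part, the only new contribution at $\hbar$-order $i$ is $\symh$ of a bracket of $\hbar^{i-1}X_i$ (and of $\alpha_i\log\hbar\der$) against the leading symbols $\calP_0=\symh(P^{(i-1)}_0)$, $\calQ_0=\symh(Q^{(i-1)}_0)$. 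Concretely, the shift in $\calP_i$ is $\{\calX_i+\alpha_i\log\xi,\calP_0\}$ up to the known reordering that turns the naive symbol of $\exp(\hbar^{-1}X_i)$-conjugation into the $K_{2p}$-corrected expression — this is exactly the role of Step 3 and the generating function \eqref{def:K2p} (Aoki-type exponential calculus). So the requirement ``$\calP_i^{(i-1)}$ plus correction has no negative powers of $\xi$'' and the analogous one for $\calQ$ become a pair of linear equations for the single unknown symbol $\calX_i$ (of degree $\leq -1$) and the constant $\alpha_i$.

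Next I would solve that pair of equations and check consistency. Because $\{f_0,g_0\}=1$ passes to the leading symbols as a Poisson relation between $\calP_0$ and $\calQ_0$ of the form $\partial_\xi\calQ_0\,\partial_x\calP_0-\partial_\xi\calP_0\,\partial_x\calQ_0 = (\text{unit})$ — equivalently the Jacobian of $(\calP_0,\calQ_0)$ with respect to $(x,\xi)$ is $1$ — the map $\calY\mapsto(\{\calY,\calP_0\},\{\calY,\calQ_0\})$ is, after projecting to negative order, invertible with the explicit inverse being precisely the combination $\partial_\xi\calQ_0\cdot(\text{first})-\partial_\xi\calP_0\cdot(\text{second})$ integrated in $\xi$. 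That is the formula \eqref{ai+tildeXi=int}: its right-hand side is built so that $\{\,\cdot\,,\calP_0\}$ and $\{\,\cdot\,,\calQ_0\}$ of it reproduce the negative-order parts of $\calP_i^{(i-1)}$ and $\calQ_i^{(i-1)}$ that we need to kill. Here I must verify that the antiderivative in $\xi$ is well defined (no $\xi^{-1}$ obstruction beyond the explicitly allowed $\alpha_i\log\xi$ term) and that the result has the right form $\alpha_i\log\xi+\tilde\calX_i$ with $\tilde\calX_i$ of degree $\leq -1$; this uses that the integrand's residue at $\xi=\infty$ is a constant, which is where the integration constant in Step 2 is pinned down and where the compatibility between the $f$- and $g$-equations (a consequence of $[f,g]=\hbar$, hence $\{f_0,g_0\}=1$) is genuinely used.

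The main obstacle I anticipate is precisely this last compatibility: a priori the two conditions (cancel the negative part of $\calP_i$, cancel that of $\calQ_i$) over-determine $\calX_i$, and one needs the canonical-commutation structure to guarantee they are consistent and that the extra freedom is exactly the one constant $\alpha_i$. Making this rigorous requires tracking how $[f,g]=\hbar$ propagates through the conjugations to give a relation among $\calP_0,\calQ_0,\calP_i^{(i-1)},\calQ_i^{(i-1)}$ at $\hbar$-order $i$ — essentially that $\{\calP^{(i-1)},\calQ^{(i-1)}\}=1$ holds through order $i$ by the inductive hypothesis, so its order-$i$ part yields $\{\calP_0,\calQ_i^{(i-1)}\}+\{\calP_i^{(i-1)},\calQ_0\}=0$, which is the integrability condition making \eqref{ai+tildeXi=int} consistent. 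A secondary technical point is justifying, via the Campbell--Hausdorff expansion of Appendix A and the exponential-calculus bookkeeping, that the passage from $\tilde\calX_i$ to the actual $\calX_i$ is given by the $K_{2p}$-series in Step 3 — i.e.\ that conjugating by $\exp(\hbar^{-1}X_i)$ contributes, at the leading symbol level and modulo lower order, $\frac{\ad_{\{,\}}(\symh X_0)}{1-\exp(-\ad_{\{,\}}(\symh X_0))}$ applied to $\calX_i$ rather than just $\calX_i$ itself. Once these two points are in place, Steps 0--4 visibly produce $X_i$, $\alpha_i$ with the desired cancellation, the induction closes, and by \propref{prop:RH}(i) the resulting $W=\exp(X/\hbar)(\hbar\der)^{\alpha/\hbar}$ is a dressing operator of the $\hbar$-dependent KP hierarchy.
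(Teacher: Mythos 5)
Your proposal follows essentially the same route as the paper's proof: induction on $i$, the Campbell--Hausdorff factorisation of $W$ isolating the order-$i$ contribution as a Poisson bracket of $\alpha_i\log\xi+\tilde\calX_i$ against $\calP_0$ and $\calQ_0$, inversion of the resulting $2\times 2$ system using $\{\calP_0,\calQ_0\}=1$, the compatibility of the $\der_\xi$- and $\der_x$-equations derived from the order-$(-i-1)$ part of $[P^{(i-1)},Q^{(i-1)}]=\hbar$ (the paper's Lemma~\ref{lem:compatibility}), and the $K_{2p}$-series to pass from $\tilde\calX_i$ back to $\calX_i$. You correctly identified both technical pivots (the compatibility lemma and the Campbell--Hausdorff/exponential-calculus bookkeeping relegated to Appendix~\ref{app:proof-campbell-hausdorff}), so the plan matches the paper's argument in all essentials.
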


\bigskip
The rest of this section is the proof of \thmref{thm:recursion:X} by
induction.

Let us denote the ``known'' part of $X$ and $\alpha$ by $X^{(i-1)}$ and
$\alpha^{(i-1)}$ as in \eqref{X(i-1),a(i-1)} and, as intermediate
objects, consider $P^{(i-1)}$ and $Q^{(i-1)}$ defined by
\eqref{def:Pi-1} and \eqref{def:Qi-1}, which are expanded as
\eqref{Pi-1:h-expand} and \eqref{Qi-1:h-expand}.

If $X$ and $\alpha$ are expanded as \eqref{X:h-expansion} and
\eqref{alpha:h-expansion}, the dressing operator
$W=\exp(X/\hbar)\exp(\alpha\log(\hbar\der)/\hbar)$ is factored as
follows by the Campbell-Hausdorff theorem:
\begin{multline}
    W=
    \exp\left(
     \hbar^{i-1} (\alpha_i \log(\hbar\der) + \tilde X_i) + 
     \hbar^i X_{>i}
    \right) \times
\\
    \times
    \exp\left( \hbar^{-1} X^{(i-1)} \right) 
    \exp\bigl( \hbar^{-1} \alpha^{(i-1)} \log(\hbar\der)\bigr),
\label{W=exp(Xi)exp(Xi-1)}
\end{multline}
where $\ordh(\alpha_i\log(\hbar\der)+\tilde X_i(x,\hbar\der))=0$,
$\ordh(X_{>i})\leqq 0$ and the principal symbol of
$\alpha_i\log(\hbar\der)+\tilde X_i(x,\hbar\der)$ is defined by
\begin{multline}
    \symh(\alpha_i\log(\hbar\der)+\tilde X_i){}(x,\xi)
\\
    =
    \sum_{n=1}^\infty \frac{(\ad_{\{,\}} \symh(X_0) )^{n-1}}{n!}
    \symh(X_i)
    +
    \exp\bigl(\ad_{\{,\}} \symh(X_0)\bigr) (\alpha_i \log\xi).
\label{def:tildeX}
\end{multline}
Note that the only log term in \eqref{def:tildeX} is $\alpha_i \log\xi$
and the rest is sum of negative powers of $\xi$. The principal symbol of
$X_i$ is recovered from $\tilde X_i$ by the formula
\begin{equation}
 \begin{split}
    \symh(X_i) 
    &= \symh(\tilde X'_i)
    - \frac{1}{2} \{\symh(X_0),\symh(\tilde X'_i)\}
    + \sum_{p=1}^\infty K_{2p}
      (\ad_{\{,\}} (\symh(X_0)))^{2p} \symh(\tilde X'_i), 
\\
    \symh(\tilde X'_i) &:= 
    \symh(\tilde X_i)(x,\xi)
    - \exp(\ad_{\{,\}} \symh(X_0)) (\alpha_i \log\xi)
\\
    &=
    \sum_{n=1}^\infty \frac{(\ad_{\{,\}} \symh(X_0) )^{n-1}}{n!}
    \symh(X_i).
 \end{split}
\label{tildeXi->Xi}
\end{equation}
Here coefficients $K_{2p}$ are defined by \eqref{def:K2p}. This
inversion relation is the origin of \eqref{tildeXi->Xi:symbol}. (Note
that the principal symbol determines the operator $X_i$, since it is a
homogeneous term in the expansion \eqref{X:h-expansion}.) We prove
formulae \eqref{W=exp(Xi)exp(Xi-1)} and \eqref{tildeXi->Xi} in
\appref{app:proof-campbell-hausdorff}.

The factorisation \eqref{W=exp(Xi)exp(Xi-1)} implies
\begin{equation*}
 \begin{split}
    P =& 
    \Ad\Bigl( \exp\bigl(
      \hbar^{i-1} (\alpha_i\log(\hbar\der)+\tilde X_i)
    + \hbar^i     X_{>i}
    \bigr) \Bigr) P^{(i-1)}
\\
    =&
    P^{(i-1)} 
    + \hbar^{i-1}
    [(\alpha_i\log(\hbar\der)+\tilde X_i)+\hbar X_{>i}, P^{(i-1)}]
    + (\text{terms of $\hbar$-order $<-i$}).
 \end{split}
\end{equation*}
Thus, substituting the expansion \eqref{Pi-1:h-expand} in the step 1, we
have
\begin{equation}
 \begin{split}
    P =& 
    P^{(i-1)}_0 + \hbar P^{(i-1)}_1 + \cdots + \hbar^i P^{(i-1)}_i +
    \cdots
\\
    &+ \hbar^{i-1} [\alpha_i\log(\hbar\der)+\tilde X_i,
         P^{(i-1)}_0]
\\
    &+ (\text{terms of $\hbar$-order $<-i$}).
 \end{split}
\label{P=Ad()Pi-1}
\end{equation}
Comparing this with the $\hbar$-expansion of $P$ \eqref{P}, we can
express $P_i$'s in terms of $P^{(i-1)}_j$, $\tilde X_i$ and $\alpha_i$
as follows:
\begin{align}
    P_j &= P^{(i-1)}_j \qquad  (j=0,\dots,i-1),
\label{P0-Pi-1}
\\
    \sigma_0 (P_i) &= \sigma_0 (P^{(i-1)}_i 
    + h^{-1}[\alpha_i\log(\hbar\der)+\tilde X_i, P^{(i-1)}_0]).
\label{Pi<-Pi-1i}
\end{align}
Similar equations for $Q$ are obtained in the same way. The first
equations \eqref{P0-Pi-1} show that the terms of $\hbar$-order greater
than $-i$ in \eqref{P} are already fixed by $X_0,\dots,X_{i-1}$ and
$\alpha_0,\dots,\alpha_{i-1}$, which justifies the inductive
procedure. That is to say, we are assuming that $X_0,\dots,X_{i-1}$ and
$\alpha_0,\dots,\alpha_{i-1}$ have been already determined so that
$P_j=P^{(i-1)}_j$ and $Q_j=Q^{(i-1)}_j$ for $j=0,\dots,i-1$ are
differential operators.

The operator $X_i$ and constant $\alpha_i$ should be chosen so that the
right hand side of \eqref{Pi<-Pi-1i} and the corresponding expression
for $Q$ are differential operators. Taking equations $P^{(i-1)}_0=P_0$
and $Q^{(i-1)}_0=Q_0$ into account, we define
\begin{equation}
 \begin{split}
    \tilde P^{(i)}_i &:= P^{(i-1)}_i 
    + \hbar^{-1} [\alpha_i\log(\hbar\der)+\tilde X_i, P_0],
\\
    \tilde Q^{(i)}_i &:= Q^{(i-1)}_i 
    + \hbar^{-1} [ \alpha_i\log(\hbar\der)+\tilde X_i, Q_0].
 \end{split}
\label{Pii,Qii}
\end{equation}
Then the condition for $X_i$ and $\alpha_i$ is written in the following
form of equations for symbols:
\begin{equation}
    (\symh_0(\tilde P^{(i)}_i))_{\leq -1} = 0, \qquad
    (\symh_0(\tilde Q^{(i)}_i))_{\leq -1} = 0.
\label{Pii,Qii:diff-op}
\end{equation}
(The parts of $\hbar$-order less than $-1$ should be determined in the
next step of the induction.) To simplify notations, we denote the
symbols $\symh_0(\tilde P^{(i)}_i)$, $\symh_0(P^{(i-1)}_i)$ and so on by
the corresponding calligraphic letters as $\calP^{(i)}_i$,
$\calP^{(i-1)}_i$ etc. By this notation we can rewrite the equations
\eqref{Pii,Qii:diff-op} in the following form:
\begin{equation}
 \begin{aligned}
    (\tilde\calP^{(i)}_i)_{\leq -1} &= 0, \qquad&
    \tilde\calP^{(i)}_i &:= \calP^{(i-1)}_i +
    \{ \alpha_i \log\xi + \tilde\calX_i, \calP_0\},
\\
    (\tilde\calQ^{(i)}_i)_{\leq -1} &= 0, \qquad&
    \tilde\calQ^{(i)}_i &:= \calQ^{(i-1)}_i +
    \{ \alpha_i \log\xi + \tilde\calX_i, \calQ_0\}.
  \end{aligned}
\label{Pii,Qii:symbol}
\end{equation}
The above definitions of $\tilde\calP^{(i)}_i$ and $\tilde\calQ^{(i)}_i$
are written in the matrix form:
\begin{equation}
    \begin{pmatrix}
    \dfrac{\der \calP_0}{\der x} & - \dfrac{\der \calP_0}{\der \xi} \\ \\
    \dfrac{\der \calQ_0}{\der x} & - \dfrac{\der \calQ_0}{\der \xi} 
    \end{pmatrix}
    \begin{pmatrix}
    \dfrac{\der}{\der \xi} (\alpha_i \log\xi + \tilde\calX_i) \\ \\
    \dfrac{\der}{\der x}   (\alpha_i \log\xi + \tilde\calX_i)
    \end{pmatrix}
    =
    \begin{pmatrix}
    \tilde\calP^{(i)}_i - \calP^{(i-1)}_i \\ \\
    \tilde\calQ^{(i)}_i - \calQ^{(i-1)}_i
    \end{pmatrix}.
\label{Pii,Qii:mat}
\end{equation}
Recall that operators $P^{(i-1)}$ and $Q^{(i-1)}$ are defined by acting
adjoint operation to the canonically commuting pair $(f,g)$ in
\eqref{def:Pi-1}, \eqref{def:Qi-1} and \eqref{ft,gt}. Hence they also
satisfy the canonical commutation relation:
$[P^{(i-1)},Q^{(i-1)}]=\hbar$. The principal symbol of this relation
gives
\[
   \{\calP^{(i-1)}_0,\calQ^{(i-1)}_0\}=\{\calP_0,\calQ_0\}=1,
\]
which means that the determinant of the matrix in the left hand side of
\eqref{Pii,Qii:mat} is equal to $1$. Hence its inverse matrix is easily
computed and we have
\begin{equation}
    \begin{pmatrix}
    \dfrac{\der}{\der \xi} (\alpha_i \log\xi + \tilde\calX_i) \\ \\
    \dfrac{\der}{\der x}   (\alpha_i \log\xi + \tilde\calX_i)
    \end{pmatrix}
    =
    \begin{pmatrix}
    - \dfrac{\der \calQ_0}{\der \xi}  & \dfrac{\der \calP_0}{\der \xi}
    \\ \\
    - \dfrac{\der \calQ_0}{\der x}    & \dfrac{\der \calP_0}{\der x}
    \end{pmatrix}
    \begin{pmatrix}
    \tilde\calP^{(i)}_i - \calP^{(i-1)}_i \\ \\ 
    \tilde\calQ^{(i)}_i - \calQ^{(i-1)}_i
    \end{pmatrix}.
\label{alpha,X:mat}
\end{equation}
We are assuming that $\calP_0$ and $\calQ_0$ do not contain negative
powers of $\xi$ and we are searching for $\alpha_i\log\xi+\tilde\calX_i$
such that $\tilde\calP^{(i)}_i$ and $\tilde\calQ^{(i)}_i$ are series of
$\xi$ 
without negative powers. Since $\alpha_i$ is constant with respect to
$x$, the left hand side of \eqref{alpha,X:mat} contain only negative
powers of $\xi$. Thus taking the negative power parts of the both hand
sides in \eqref{alpha,X:mat}, we have
\begin{equation}
    \begin{pmatrix}
    \dfrac{\der}{\der \xi} (\alpha_i \log\xi + \tilde\calX_i) \\ \\
    \dfrac{\der}{\der x}   (\alpha_i \log\xi + \tilde\calX_i)
    \end{pmatrix}
    =
    \begin{pmatrix}
    \left(
    \dfrac{\der \calQ_0}{\der \xi} \calP^{(i-1)}_i
    -
    \dfrac{\der \calP_0}{\der \xi} \calQ^{(i-1)}_i
    \right)_{\leq -1}
    \\ \\
    \left(
    \dfrac{\der \calQ_0}{\der x}   \calP^{(i-1)}_i
    -
    \dfrac{\der \calP_0}{\der x}   \calQ^{(i-1)}_i
    \right)_{\leq -1}
    \end{pmatrix}.
\label{dXi/dxi,dXi/dx}
\end{equation}
This is the equation which determines $\alpha_i$ and $X_i$. 

The system \eqref{dXi/dxi,dXi/dx} is solvable thanks to
\lemref{lem:compatibility} below. Hence, integrating the first element
of the right hand side with respect to $\xi$, we obtain $\alpha_i
\log\xi + \tilde\calX_i$. This is Step 2, \eqref{ai+tildeXi=int}. In the
end, the principal symbol of $X_i$ is determined by \eqref{tildeXi->Xi}
or \eqref{tildeXi->Xi:symbol} in Step 3 and thus $X_i$ is defined as in
Step 4. This completes the construction of $X_i$ and $\alpha_i$ and the
proof of the theorem.
%
%\hfill
%$\square$

\begin{lem}
\label{lem:compatibility}
 The system \eqref{dXi/dxi,dXi/dx} is compatible.
\end{lem}

\begin{proof}
 We check:
\begin{equation}
 \begin{split}
    &\frac{\der}{\der x}
    \left(
    \dfrac{\der \calQ_0}{\der \xi} \calP^{(i-1)}_i
    -
    \dfrac{\der \calP_0}{\der \xi} \calQ^{(i-1)}_i
    \right)_{\leq -1}
\\
    =&
    \frac{\der}{\der \xi}
    \left(
    \dfrac{\der \calQ_0}{\der x}   \calP^{(i-1)}_i
    -
    \dfrac{\der \calP_0}{\der x}   \calQ^{(i-1)}_i
    \right)_{\leq -1}.
 \end{split}
\label{compatibility}
\end{equation}
Since differentiation commutes with truncation of power series, condition
\eqref{compatibility} is equivalent to saying that the negative power
part of the following is zero:
\begin{equation}
 \begin{split}
    &\frac{\der}{\der x}
    \left(
    \dfrac{\der \calQ_0}{\der \xi} \calP^{(i-1)}_i
    -
    \dfrac{\der \calP_0}{\der \xi} \calQ^{(i-1)}_i
    \right)
    -
    \frac{\der}{\der \xi}
    \left(
    \dfrac{\der \calQ_0}{\der x}   \calP^{(i-1)}_i
    -
    \dfrac{\der \calP_0}{\der x}   \calQ^{(i-1)}_i
    \right)
\\
    =&
    \dfrac{\der \calQ_0}{\der \xi} \frac{\der \calP^{(i-1)}_i}{\der x}
    -
    \dfrac{\der \calP_0}{\der \xi} \frac{\der \calQ^{(i-1)}_i}{\der x}
    -
    \dfrac{\der \calQ_0}{\der x}   \frac{\der \calP^{(i-1)}_i}{\der \xi}
    +
    \dfrac{\der \calP_0}{\der x}   \frac{\der \calQ^{(i-1)}_i}{\der \xi}
\\  
    =& - \{ \calP^{(i-1)}_i, \calQ_0 \}
       - \{ \calP_0, \calQ^{(i-1)}_i \}.
 \end{split}
\label{compatibility2}
\end{equation}
Defined from canonically commuting pair $(f,g)$ by adjoint action
\eqref{def:Pi-1} and \eqref{def:Qi-1}, the pair of operators $(P^{(i-1)},
Q^{(i-1)})$ is canonically commuting: $[P^{(i-1)},Q^{(i-1)}]=\hbar$. The
negative order part of this relation is zero. On the other hand,
substituting the expansions $P^{(i-1)}=\sum_{n=0}^\infty \hbar^n
P^{(i-1)}_n$ and $Q^{(i-1)}=\sum_{n=0}^\infty \hbar^n Q^{(i-1)}_n$ in
this canonical commutation relation and noting that $P^{(i-1)}_j$ and
$Q^{(i-1)}_j$ ($j=0,\dots,i-1$) do not contain negative order part by
the induction hypothesis, we have
\begin{equation*}
 \begin{split}
    0 &= ([P^{(i-1)}, Q^{(i-1)}])_{\leq -1}
\\
    &=
    [\hbar^i P^{(i-1)}_i, Q^{(i-1)}_0] +
    [P^{(i-1)}_0, \hbar^i Q^{(i-1)}_i] +
    (\text{terms of $\hbar$-order $<-i-1$}).
 \end{split}
\end{equation*}
Taking the symbol of $\hbar$-order $-i-1$ of this equation, we have
\begin{equation*}
    0 = \{ \calP^{(i-1)}_i, \calQ_0 \}
      + \{ \calP_0, \calQ^{(i-1)}_i\},
\end{equation*}
which proves that \eqref{compatibility2} vanishes.
\end{proof}

\section{Asymptotics of the wave function}
\label{sec:wave-function}

In this section we prove that the dressing operator of the form
\begin{equation}
    W(\hbar,x,t, \hbar\der) = \exp(X(\hbar,x,\hbar\der)/\hbar), \qquad
    \ordh X \leqq 0, \qquad \ord X \leqq -1,
\label{dressing}
\end{equation}
gives a wave function of the form
\begin{gather}
    \Psi(\hbar,x,t;z) 
    = W e^{(xz + \zeta(t,z))/\hbar} = \exp(S(\hbar,x,t,z)/\hbar),
    \qquad \ordh S \leqq 0,
\label{wave-func}
\\
    S(\hbar,x,t;z)
    = \sum_{n=0}^\infty \hbar^n S_n(x,t;z) + \zeta(t,z), \qquad
    \zeta(t,z) := \sum_{n=1}^\infty t_n z^n,
\label{S}
\end{gather}
and vice versa. 

Since the time variables $t_n$ do not play any role in this section,
we set them to zero. As the factor $(\hbar\der)^{\alpha/\hbar}$ in
\eqref{W=exp(X)} becomes a constant factor $z^{\alpha/\hbar}$ when it is
applied to $e^{xz/\hbar}$, we also omit it here.

Let $A(\hbar,x,\hbar\der)=\sum_n a_n(\hbar,x) (\hbar\der)^n$ be a
microdifferential operator. The {\em total symbol} of $A$ is a power
series of $\xi$ defined by
\begin{equation}
    \totsym(A)(\hbar,x,\xi):= \sum_n a_n(\hbar,x) \xi^n.
\label{tot-symbol}
\end{equation}
Actually, this is the factor which appears when the operator $A$ is
applied to $e^{xz/\hbar}$:
\begin{equation}
    A e^{xz/\hbar} = \totsym(A)(\hbar,x,z) e^{xz/\hbar}.
\label{tot-symbol:Laplace}
\end{equation}
Using this terminology, what we show in this section is that a operator
of the form $e^{X/\hbar}$ has a total symbol of the form $e^{S/\hbar}$
and that an operator with total symbol $e^{S/\hbar}$ has a form
$e^{X/\hbar}$. Exactly speaking, the main results in this section are
the following two propositions.
\begin{prop}
\label{prop:exp(:X:)=:exp(S):}
 Let $X=X(\hbar,x,\hbar\der)$ be a microdifferential operator such that
 $\ord X=-1$ and $\ordh X = 0$. Then the total symbol of $e^{X/\hbar}$
 has such a form as
\begin{equation}
    \totsym(\exp(\hbar^{-1} X(\hbar,x,\hbar\der)))
    = e^{S(\hbar,x,\xi)/\hbar},
\label{exp(:X:)=:exp(S):}
\end{equation}
 where $S(\hbar,x,\xi)$ is a power series of $\xi^{-1}$ without
 non-negative powers of $\xi$ and has an $\hbar$-expansion
\begin{equation}
    S(\hbar,x,\xi)=\sum_{n=0}^\infty \hbar^n S_n(x,\xi).
\label{S:h-expansion}
\end{equation}

 Moreover, the coefficient $S_n$ is determined by $X_0,\dots,X_n$ in the
 $\hbar$-expansion \eqref{X:h-expansion} of $X=\sum_{n=0}^\infty \hbar^n
 X_n$.
\end{prop}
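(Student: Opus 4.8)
The plan is to compute $\totsym(\exp(\hbar^{-1}X))$ directly from the definition $\exp(\hbar^{-1}X)e^{xz/\hbar}=\totsym(\exp(\hbar^{-1}X))(\hbar,x,z)e^{xz/\hbar}$, and to organize the computation so that the $\hbar$-grading is visible at every step. First I would recall the composition formula for total symbols: if $A=A(\hbar,x,\hbar\der)$, then $A\,(e^{xz/\hbar}\cdot\text{(series in }\xi^{-1}))$ is governed by the Leibniz rule $\totsym(A B)=\sum_{k\geq0}\frac{1}{k!}(\hbar\der_\xi)^k\totsym(A)\cdot(\hbar\der_x)^k\totsym(B)$ (the $\hbar$'s inserted because $\der=\hbar^{-1}(\hbar\der)$ acts on $e^{xz/\hbar}$ producing a factor $z/\hbar$, so each contraction in $\ordh$ counts as order zero). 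Since $\ordh X=0$ and $\ord X=-1$, each application of $X/\hbar$ lowers $\ord$ by at least $1$ and raises $\ordh$ by $0$; crucially the cross-terms $\hbar\der_\xi\otimes\hbar\der_x$ are $\hbar$-order-preserving, so the whole exponential series $\sum_m \frac{1}{m!}(\hbar^{-1}X)^m$ acting on $1$ is $\ordh$-bounded by $0$ and lies in $\hbar^{-1}\cdot(\text{something})$ only through the explicit prefactors. The upshot is that $\totsym(\exp(\hbar^{-1}X))$ is a well-defined formal power series in $\xi^{-1}$ with coefficients that are formally regular in $\hbar$ after we extract the overall pattern, but this naive bound gives coefficients of size $\hbar^{-m}$ at $m$-th order — not yet the claimed exponential form.

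The key structural step, borrowed from Aoki's exponential calculus, is to \emph{posit} that $\totsym(\exp(\hbar^{-1}X))=e^{S/\hbar}$ with $S=\sum_{n\geq0}\hbar^nS_n(x,\xi)$, substitute this ansatz into the functional/recursive identity that $\exp(\hbar^{-1}X)$ satisfies, and show the resulting equations for the $S_n$ are solvable order by order. Concretely, I would differentiate: introduce a parameter $s$ and consider $W(s):=\exp(s\hbar^{-1}X)$, so that $\hbar\frac{d}{ds}W(s)=X\cdot W(s)$ as operators; applying both sides to $e^{xz/\hbar}$ and writing $\totsym(W(s))=e^{T(s,\hbar,x,\xi)/\hbar}$ turns this into a Hamilton–Jacobi / transport equation
\begin{equation*}
    \hbar\frac{\der T}{\der s}
    = \sum_{k=0}^\infty \frac{1}{k!}(\hbar\der_\xi)^k\totsym(X)\cdot
      (\hbar\der_x)^k e^{T/\hbar}\cdot e^{-T/\hbar},
\end{equation*}
and the right-hand side, once one pulls out $e^{T/\hbar}$ and cancels, becomes a power series in $\hbar$ whose coefficients are differential polynomials in $T$ and $\totsym(X)$ — with no negative powers of $\hbar$, because each $\hbar\der_x$ hitting $e^{T/\hbar}$ produces $\der_x T$ (order $\hbar^0$) times $e^{T/\hbar}$ plus $\hbar$-corrections. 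Expanding $T=\sum_n\hbar^nT_n$ and matching powers of $\hbar$ gives $\der_sT_0=\{\text{leading symbol relation}\}$ at order $\hbar^0$, namely $\der_s T_0=\totsym(X_0)$ evaluated appropriately, i.e. $T_0=sX_0$, and at order $\hbar^n$ an equation of the form $\der_s T_n=(\text{polynomial in }\der_x^jT_0,\dots,\der_x^jT_{n-1}\text{ and }X_0,\dots,X_n)$, solvable by integration in $s$ from the initial condition $T(0)=0$. Setting $s=1$ yields $S_n=T_n|_{s=1}$, manifestly a function of $X_0,\dots,X_n$ only, and manifestly with no non-negative powers of $\xi$ since every ingredient is a series in $\xi^{-1}$ (the operators $(\hbar\der_\xi)^k$ only deepen the negative order, and $\der_x$ preserves it).

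I would then package this cleanly: rather than the $s$-flow, it may be more in the spirit of Section~2 to argue by induction on the total $\hbar$-order using the composition rule $\exp(\hbar^{-1}X)=\exp(\hbar^{-1}X_{\geq1})\exp(\hbar^{-1}X_0)$ (Campbell–Hausdorff, as in Appendix~A), where $\exp(\hbar^{-1}X_0)$ has total symbol exactly $e^{X_0(x,\xi)/\hbar}$ by a direct computation (this is the ``classical'' dispersionless dressing, giving $S_0=X_0$), and then composing with the remaining factor contributes only nonnegative powers of $\hbar$ to $S/\hbar$. Either route reduces the statement to: (a) the base case $\totsym(e^{X_0/\hbar})=e^{X_0/\hbar}$ when $X_0$ is $\hbar$-independent of order $-1$ — which follows because the Leibniz corrections all carry extra $\hbar$'s relative to the leading transport term and one checks the exponential solves the transport equation exactly at this order; and (b) stability of the exponential form under composition with an operator whose $\log$ is $O(\hbar)$. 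The main obstacle I anticipate is bookkeeping in step (b)/the order-$\hbar^n$ transport equation: one must verify rigorously that \emph{no} negative power of $\hbar$ survives after dividing by $e^{T/\hbar}$, i.e. that the apparent $\hbar^{-1}$ from the exponential's derivative is always exactly matched by an $\hbar$ from an $\hbar\der_x$ contraction. This is precisely where $\ord X=-1$ (not just $\ordh X=0$) is used — it forces each factor of $\totsym(X)$ to be accompanied by enough $\der_x$-contractions — and making that matching airtight, tracking both gradings simultaneously, is the delicate combinatorial heart of the proof; the determination of $S_n$ from $X_0,\dots,X_n$ then falls out as a byproduct of the triangular structure of the recursion.
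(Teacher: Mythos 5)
Your main route (introduce the parameter $s$, differentiate $\exp(s\hbar^{-1}X)$, write the total symbol as $e^{T(s)/\hbar}$, and solve the resulting transport equation order by order in $\hbar$) is structurally the same as the paper's proof, which sets $E(s)=\exp(sX/\hbar)$, derives $\der S/\der s = Y(s;\hbar,x,x,\xi,\xi)$, and expands in $s$ and $\hbar$. However, your concrete claims at leading order are wrong, and they are wrong at exactly the point the proposition is about. At order $\hbar^0$ the transport equation is \emph{not} $\der_s T_0=\totsym(X_0)$: each contraction $\hbar^k\der_\xi^k X\cdot\der_y^k e^{T/\hbar}$ contributes $\frac{1}{k!}\der_\xi^k X_0\cdot(\der_x T_0)^k$ at leading order, so the equation is the nonlinear Hamilton--Jacobi equation $\der_s T_0 = X_0(x,\xi+\der_x T_0)$ and $T_0\neq sX_0$, i.e.\ $S_0\neq X_0$. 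The paper's own simplest example $X=x(\hbar\der)^{-1}$ gives $S_0=x\xi^{-1}-\tfrac12 x\xi^{-3}+\tfrac12 x\xi^{-5}-\cdots$, not $x\xi^{-1}$. For the same reason your proposed base case (a) in the alternative packaging --- that $\totsym(e^{X_0/\hbar})=e^{X_0(x,\xi)/\hbar}$ exactly when $X_0$ is $\hbar$-independent --- is false; this identity failing is precisely the phenomenon the authors flag as ``not obvious'' and is what makes the proposition nontrivial.

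Beyond these errors, the part you defer as ``the delicate combinatorial heart'' is in fact the entire content of the paper's proof. After setting up the recursion \eqref{recursion:Y(l,k,n)}--\eqref{recursion:S(l+1,n)}, the paper must prove two things by a double induction: first that $Y^{(l)}_{k,n}=0$ for $k>l+n$, so that the sum over $k$ defining $S^{(l+1)}_n$ is finite; and second the order estimate $\ordx Y^{(l)}_{k,n}\leq -k-l-1$, which gives $\ordx S^{(l+1)}_n\leq -l-1$ and hence makes the sum $S_n=\sum_{l\geq 1}S^{(l)}_n$ meaningful as a formal power series in $\xi^{-1}$ (this is where $\ord X\leq -1$, not just $\ordh X=0$, enters). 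Your sketch correctly identifies that $\ord X=-1$ is the mechanism and that the recursion is triangular in $n$ (so $S_n$ depends only on $X_0,\dots,X_n$), but without the vanishing statement and the order estimate the ``solution by integration in $s$'' is an infinite sum at each order and the argument does not close. So the proposal chooses the right framework but leaves the essential verification undone and asserts a false simplification where the real work should be.
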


Explicitly, $S_n$ is determined as follows:
\begin{itemize}
 \item (Step 0) Assume that $X_0,\dots,X_n$ are given. Let $X_i(x,\xi)$
       be the total symbol $\totsym(X_i(x,\hbar\der))$.

 \item (Step 1) Define $Y^{(l)}_{k,m}(x,y,\xi,\eta)$ and
       $S^{(l)}(x,\xi)$ by the following recursion relations:
\begin{align}
    &Y^{(l)}_{k,-1}=0
\\
    &S^{(0)}_m=0,
\\
    &Y^{(l)}_{0,m}(x,y,\xi,\eta) = \delta_{l,0} X_m(x,\xi)
\label{Yl0m}
\end{align}
for $l\geqq 0$, $m=0,\dots,n$,
\begin{multline}
    Y^{(l)}_{k+1,m}(x,y,\xi,\eta)
\\    =
    \frac{1}{k+1}
    \left(
    \der_\xi \der_y Y^{(l)}_{k,m-1}(x,y,\xi,\eta)
    +
    \sum_{\substack{0\leq l' \leq l-1 \\ 0\leq m' \leq m}}
    \der_\xi Y^{(l')}_{k,m'}(x,y,\xi,\eta)
    \der_y S^{(l-l')}_{m-m'}(y,\eta)
    \right)
\label{recursion:Y(l,k,m)}
\end{multline}
       for $k\geqq 0$, and
\begin{equation}
    S^{(l+1)}_m(x,\xi)
    = \frac{1}{l+1} \sum_{k=0}^{l+m} Y^{(l)}_{k,m}(x,x,\xi,\xi).
\label{recursion:S(l+1,m)}
\end{equation}
       (We shall prove that $Y^{(l)}_{k,m}=0$ if $k>l+m$.) Schematically
       this procedure goes as follows:
{\small
\begin{equation*}
 \begin{matrix}
                     &         & Y^{(l)}_{0,0}=\delta_{l,0}X_0
                     &         & Y^{(l)}_{0,1}=\delta_{l,0}X_1
                     &         & Y^{(l)}_{0,2}=\delta_{l,0}X_2
\\
                     &         & +
                     &\searrow & +
                     &\searrow & +
\\
    Y^{(l)}_{k,-1}=0 &\to      & Y^{(l)}_{k,0}
                     &\to      & Y^{(l)}_{k,1}
                     &\to      & Y^{(l)}_{k,2} & \cdots
\\
                     &         & \downarrow
                     &\nearrow & \downarrow
                     &\nearrow & \downarrow
\\
                     &         & S^{(l+1)}_0
                     &         & S^{(l+1)}_1
                     &         & S^{(l+1)}_2
 \end{matrix}
\end{equation*}
}

 \item (Step 2) $S_n(x,\xi)= \sum_{l=1}^\infty S^{(l)}_n(x,\xi)$. (The
       sum makes sense as a power series of $\xi$.)
\end{itemize}

\begin{prop}
\label{prop::exp(S):=exp(:X:)}
 Let $S=\sum_{n=0}^\infty \hbar^n S_n$ be a power series of $\xi^{-1}$
 without non-negative powers of $\xi$. Then there exists a
 microdifferential operator $X(\hbar,x,\hbar\der)$ such that $\ord X
 \leqq -1$, $\ordh X \leqq 0$ and
\begin{equation}
    \totsym(\exp(\hbar^{-1} X(\hbar,x,\hbar\der)))
    = e^{S(\hbar,x,\xi)/\hbar}.
\label{:exp(S):=exp(:X:)}
\end{equation}
 Moreover, the coefficient $X_n(x,\xi)$ in the $\hbar$-expansion
 $X=\sum_{n=0}^\infty \hbar^n X_n$ of the total symbol
 $X=X(\hbar,x,\xi)$ is determined by $S_0,\dots,S_n$ in the
 $\hbar$-expansion of $S$.
 
\end{prop}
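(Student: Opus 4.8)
\emph{Strategy.} The plan is to invert, term by term, the recursion of \propref{prop:exp(:X:)=:exp(S):}. Applying that proposition (its conclusion holds verbatim for $\ord X\le -1$, by the same argument) to an arbitrary admissible $X=\sum_{n\ge 0}\hbar^n X_n$ produces $\totsym(\exp(\hbar^{-1}X))=e^{\tilde S/\hbar}$ with $\tilde S_n=\Phi_n(X_0,\dots,X_n)$, where $\Phi_n:=\sum_{l\ge 1}S^{(l)}_n$ is the universal differential polynomial built in Steps~0--2. Since $S^{(1)}_n=\totsym(X_n)$, we have $\Phi_n(X_0,\dots,X_n)=\totsym(X_n)+\sum_{l\ge 2}S^{(l)}_n$, and inspection of \eqref{Yl0m}--\eqref{recursion:S(l+1,m)} shows that every monomial occurring in $\sum_{l\ge 2}S^{(l)}_n$ carries at least one factor of the form $\der_\xi X_{m'}$ ($0\le m'\le n$), possibly after further $\der_\xi\der_y$ differentiations. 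Because each $X_j$ has $\ord X_j\le -1$, such a factor has order $\le -2$ in $\xi$, so $\ord S^{(l)}_n\le -(2l-1)$; in particular, for each power $\xi^{-k}$ only finitely many $l$ contribute. Given $S=\sum_{n\ge 0}\hbar^n S_n$ (with $\ord S_n\le -1$), I would then construct $X_0,X_1,\dots$ recursively by solving $\Phi_n(X_0,\dots,X_n)=S_n$ for $X_n$, assuming $X_0,\dots,X_{n-1}$ already found with $\Phi_j(X_0,\dots,X_j)=S_j$ for $j<n$.

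\emph{The inner recursion.} The key structural fact — which I would state as a lemma and prove by induction on $l$ — is that $\Phi_n$ is triangular in the coefficients of $X_n$: writing $X_n=\sum_{k\ge 1}\chi_{n,k}(x)\xi^{-k}$ and $S_n=\sum_{k\ge 1}\sigma_{n,k}(x)\xi^{-k}$, the coefficient of $\xi^{-k}$ in $\Phi_n(X_0,\dots,X_n)$ equals $\chi_{n,k}$ plus a polynomial $R_{n,k}$ in $\{\der_x^a\chi_{n,j}:\ j<k,\ a\ge 0\}$ and in the $x$-derivatives of the coefficients of $X_0,\dots,X_{n-1}$ — indeed the $\der_\xi X_{m'}$-factor noted above drops the relevant $\xi$-order by at least two. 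Granting this, $\Phi_n(X_0,\dots,X_n)=S_n$ is solved coefficientwise: set $\chi_{n,1}:=\sigma_{n,1}$ and, inductively in $k$, $\chi_{n,k}:=\sigma_{n,k}-R_{n,k}$ (a known expression in $\chi_{n,1},\dots,\chi_{n,k-1}$ and in $X_0,\dots,X_{n-1}$). This defines $X_n:=\sum_{k\ge 1}\chi_{n,k}(\hbar\der)^{-k}$, which has $\ord X_n\le -1$ by construction, and $X_n$ manifestly depends only on $S_0,\dots,S_n$. (The inner sums in \eqref{recursion:Y(l,k,m)} are finite because $Y^{(l)}_{k,m}=0$ for $k>l+m$, as asserted in Step~1.)

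\emph{Assembling $X$.} Put $X:=\sum_{n\ge 0}\hbar^n X_n$. The $X_n$ are independent of $\hbar$, so $\ordh X\le 0$ holds automatically; crucially, no ``hidden regularity'' of $\totsym^{-1}(e^{S/\hbar})$ needs to be exhibited by hand (the example $X=x(\hbar\der)^{-1}$ shows this is delicate), because the $\hbar$-coefficients of $X$ are being built directly rather than $X$ as a whole. Also $\ord X=\max_n\ord X_n\le -1$. Applying \propref{prop:exp(:X:)=:exp(S):} to this $X$ now gives $\totsym(\exp(\hbar^{-1}X))=e^{\hat S/\hbar}$ with $\hat S_n=\Phi_n(X_0,\dots,X_n)=S_n$ by construction, i.e.\ $\hat S=S$, which is precisely \eqref{:exp(S):=exp(:X:)}.

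\emph{Main obstacle.} The heart of the matter is the triangularity lemma, i.e.\ showing that beyond the bare term $S^{(1)}_n=\totsym(X_n)$ every occurrence of a coefficient of $X_n$ in $\Phi_n$ is accompanied by an order-lowering factor, so that the coefficientwise recursion in $k$ actually closes. This requires a careful induction on $l$ which simultaneously tracks (i) the order in $\xi$ of $Y^{(l)}_{k,m}$ and $S^{(l)}_m$, (ii) the way the variables $X_0,\dots,X_n$ enter them — in particular that $X_n$ enters $Y^{(l)}_{k,m}$ with $l\ge 1$ only through a $\der_\xi$ applied to it or through a product with some $\der_\xi X_{m'}$ — and (iii) the vanishing $Y^{(l)}_{k,m}=0$ for $k>l+m$. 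Everything else is bookkeeping, and the extension of \propref{prop:exp(:X:)=:exp(S):} to the range $\ord X\le -1$ is routine.
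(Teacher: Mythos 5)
Your proposal is correct and follows essentially the same route as the paper: both invert the forward recursion of \propref{prop:exp(:X:)=:exp(S):} coefficientwise in the homogeneous $\xi$-degree, using the fact that every correction term beyond the bare $S^{(1)}_n=\totsym(X_n)$ carries a $\der_\xi$ and hence only involves lower-degree coefficients of $X_n$, together with the vanishing $Y^{(l)}_{k,m,j}=0$ for $l+k+1>j$ to keep all sums finite. The ``triangularity lemma'' you defer is precisely what the paper's induction on $j$ (via \eqref{recursion:Y(l,k,n)<-Y} and \eqref{Y00mj=Smj-Ylkmj}) carries out.
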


Explicit procedure is as follows:
\begin{itemize}
 \item (Step 0) Assume that $S_0,\dots,S_n$ are given. Expand them into
       homogeneous terms with respect to powers of $\xi$:
       $S_n(x,\xi)=\sum_{j=1}^\infty S_{n,j}(x,\xi)$, where $S_{n,j}$ is
       a term of degree $-j$.
 \item (Step 1) Define $Y^{(l)}_{k,n,j}(x,y,\xi,\eta)$ as follows:
\begin{align}
    &Y^{(l)}_{k,-1,j}(x,y,\xi,\eta)=0,
\\
    &Y^{(l)}_{k,m,1}(x,y,\xi,\eta)
    =\delta_{l,0}\delta_{k,0} S_{m,1}(x,\xi)
\label{Ylkm1}
\end{align}
       for $m=0,\dots,n$, $k\geqq 0$, $l\geqq 0$ and
\begin{equation}
    Y^{(l)}_{0,m,j}=0
\end{equation}
       for $m=0,\dots,n$, $l>0$, $j\geqq 1$. For other $(l,k,m,j)$,
       $(l,k)\neq(0,0)$, $Y^{(l)}_{k,m,j}$ are determined by the
       recursion relation:
\begin{multline}
    Y^{(l)}_{k+1,m,j}(x,y,\xi,\eta)
    =
    \frac{1}{k+1}
    \Biggl(
    \der_\xi \der_y Y^{(l)}_{k,m-1,j-1}(x,y,\xi,\eta)+
\\
    +
    \sum_{\substack{0\leq l' \leq l-1\\
                    0\leq j' \leq j-1, 
                    0\leq m' \leq m \\
                    0\leq k'' \leq j-j'-l+l'}}
    \frac{1}{l-l'}
    \der_\xi Y^{(l')}_{k,m',j'}(x,y,\xi,\eta)
    \der_y Y^{(l-l'-1)}_{k'',m-m',j-j'-1}(x,x,\xi,\xi)
    \Biggr).
\label{recursion:Y(l,k,m)<-Y}
\end{multline}
       The remaining $Y^{(0)}_{0,m,j}$ is determined by:
\begin{equation}
    Y^{(0)}_{0,m,j}(x,y,\xi,\eta)
    =
    S_{m,j}(x,\xi)
    - \sum_{\substack{(l,k)\neq(0,0)\\l,k\geq 0, l+k\leq j}}
      \frac{1}{l+1} Y^{(l)}_{k,m,j}(x,x,\xi,\xi).
\label{Y00mj=Smj-Ylkmj}
\end{equation}
       (We shall show that $Y^{(l)}_{k,m,j}=0$ for $l+k>j$.)
       Schematically this procedure goes as follows:
\begin{equation*}
 \begin{matrix}
    &  & 
    Y^{(l)}_{k,m,1}=\delta_{l,0}\delta_{k,0}S_{m,1}
\\
    &  & 
    \downarrow\hskip 1.5cm 
\\
    Y^{(l')}_{k',m',1}(m'<m)      & \to  & 
    Y^{(l)}_{k,m,2}\ (k,l\neq 0)  & \to  &
    Y^{(0)}_{0,m,2}               & \leftarrow &
    S_{m,2}
\\
    &  & 
    \downarrow\hskip 1.5cm & \swarrow
\\
    Y^{(l')}_{k',m',1}, Y^{(l')}_{k',m',2}(m'<m)      & \to  & 
    Y^{(l)}_{k,m,3}\ (k,l\neq 0)  & \to  &
    Y^{(0)}_{0,m,3}               & \leftarrow &
    S_{m,3}
\\
    &  & 
    \vdots\hskip 1.5cm
 \end{matrix}
\end{equation*}

\item (Step 2) $X_n(x,\xi)=\sum_{j=1}^\infty
       Y^{(0)}_{0,n,j}(x,x,\xi,\xi)$.  (The infinite sum is the
       homogeneous expansion in terms of powers of $\xi$.)
\end{itemize}

Combining these propositions with the results in \secref{sec:recursion},
we can, in principle, make a recursion formula for $S_n$
($n=0,1,2,\dots$) of the wave function of the solution of the KP
hierarchy corresponding to the quantised canonical transformation
$(f,g)$ as follows: let $S_0,\dots,S_{i-1}$ be given.
\begin{enumerate}
 \item By \propref{prop::exp(S):=exp(:X:)} we have $X_0,\dots,X_{i-1}$.
 \item We have a recursion formula for $X_i$ by
       \thmref{thm:recursion:X}. 
 \item \propref{prop:exp(:X:)=:exp(S):} gives a formula for $S_i$.
\end{enumerate}
If we take the factor $(\hbar\der)^{\alpha/\hbar}$ into account, this
process becomes a little bit complicated, but essentially the same.

\bigskip
The rest of this section is devoted to the proof of
\propref{prop:exp(:X:)=:exp(S):} and \propref{prop::exp(S):=exp(:X:)}. 

If one would consider that ``the principal symbol is obtained just by
replacing $\hbar\der$ with $\xi$'', the statements of the above
propositions might seem obvious, but since the multiplication in the
definition of
\begin{equation}
    e^{X/\hbar} 
    = \sum_{n=0}^\infty \frac{X(\hbar,x,\hbar\der)^n}{\hbar^n n!}
\label{exp(X)}
\end{equation}
is non-commutative, while the multiplication of total symbols in the
series 
\begin{equation}
    e^{S/\hbar} 
    = \sum_{n=0}^\infty \frac{S(\hbar,x,\xi)^n}{\hbar^n n!}
\label{exp(S)}
\end{equation}
is commutative, it needs to be proved. In fact, computation of the
simplest example of $X=x (\hbar\der)^{-1}$ would show how complicated
the formula can be:
\begin{equation*}
 \begin{split}
    & \totsym(e^{x(\hbar\der)^{-1}/\hbar}) =
    \sum_{n=0}^\infty \frac{1}{n!\hbar^n} \totsym(X^n)
\\
    =&
    1 + \frac{1}{1! \hbar} x\xi^{-1}
    + \frac{1}{2! \hbar^2} (x^2 \xi^{-2} - \hbar x \xi^{-3})
\\
    &+ \frac{1}{3! \hbar^3} 
      (x^3 \xi^{-3} - 3 \hbar x^2 \xi^{-4} + 3 \hbar^2 x \xi^{-5})
\\
    &+ \frac{1}{4! \hbar^4}
      (x^4 \xi^{-4} - 6 \hbar x^3 \xi^{-5} + 15 \hbar^2 x^2 \xi^{-6}
       - 15 \hbar^3 x \xi^{-7})
    + \cdots
\\
    =&
    \exp 
    \frac{1}{\hbar}\left(
     x\xi^{-1} - \frac{x\xi^{-3}}{2} + \frac{x\xi^{-5}}{2}
     - \frac{5x\xi^{-7}}{8} + \cdots
    \right).
 \end{split}
\end{equation*}
It is not obvious, at least to authors, why there is no more negative
powers of $\hbar$ in the last expression, which can be obtained by
calculating the logarithm of the previous expression.

To avoid confusion, the commutative multiplication of total symbols
$a(\hbar,x,\xi)$ and $b(\hbar,x,\xi)$ as power series is denoted by
$a(\hbar,x,\xi)\, b(\hbar,x,\xi)$ and the non-commutative multiplication
corresponding to the operator product is denoted by $a(\hbar,x,\xi)
\circ b(\hbar,x,\xi)$. Recall that the latter multiplication is
expressed (or defined) as follows:
\begin{equation}
 \begin{split}
    a(\hbar,x,\xi) \circ b(\hbar,x,\xi)
    &=
    e^{\hbar\der_\xi \der_y} 
    a(\hbar,x,\xi) b(\hbar,y,\eta)|_{y=x,\eta=\xi}
\\
    &=
    \sum_{n=0}^\infty\frac{\hbar^n}{n!}
    \der_\xi^n a(\hbar,x,\xi) \der_y^n b(\hbar,y,\eta)|_{y=x,\eta=\xi}.
\label{def:symbol-prod}
 \end{split}
\end{equation}
(See, for example, \cite{sch:85}, \cite{aok:86} or \cite{kas-rou:08}.)
The order of an operator corresponding to symbol $a(\hbar,x,\xi)$ is
denoted by $\ordx a(\hbar,x,\xi)$, which is the same as the order of
$a(\hbar,x,\xi)$ as a power series of $\xi$. The $\hbar$-order is the
same as that of the operator: $\ordh x = \ordh \xi = 0$, $\ordh \hbar =
-1$.

The main idea of proof of propositions is due to Aoki \cite{aok:86},
where exponential calculus of pseudodifferential operators is
considered. He considered analytic symbols of exponential type, while
our symbols are formal ones. Therefore we have only to confirm that
those symbols make sense as formal series.

First, we prove the following lemma.
\begin{lem}
\label{lem:prod-exp-symbol}
 Let $a(\hbar,x,\xi)$, $b(\hbar,x,\xi)$, $p(\hbar,x,\xi)$ and
 $q(\hbar,x,\xi)$ be symbols such that
 $\ordx a(\hbar,x,\xi) = M$, $\ordh a(\hbar,x,\xi) = 0$, $\ordx
 b(\hbar,x,\xi) = N$, $\ordh b(\hbar,x,\xi) = 0$, $\ordx p(\hbar,x,\xi)
 = \ordx q(\hbar,x,\xi) = 0$, $\ordh p(\hbar,x,\xi) = \ordh
 q(\hbar,x,\xi) = 0$.

 Then there exist symbols $c(\hbar,x,\xi)$ ($\ordx c(\hbar,x,\xi) =
 N+M$, $\ordh c(\hbar,x,\xi) = 0$) and $r(\hbar,x,\xi)$ ($\ordx
 r(\hbar,x,\xi) = 0$, $\ordh r(\hbar,x,\xi) = 0$) such that
\begin{equation}
    \bigl(a(\hbar,x,\xi) e^{p(\hbar,x,\xi)/\hbar}\bigr)
    \circ
    \bigl(b(\hbar,x,\xi) e^{q(\hbar,x,\xi)/\hbar}\bigr)
    =
    c(\hbar,x,\xi) e^{r(\hbar,x,\xi)/\hbar}.
\label{prod-exp-symbol}
\end{equation}
\end{lem}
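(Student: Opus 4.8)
The plan is to compute the left-hand side of \eqref{prod-exp-symbol} directly from the symbol product \eqref{def:symbol-prod} and read off $c$ and $r$. Writing $\tilde a=a(\hbar,x,\xi)e^{p(\hbar,x,\xi)/\hbar}$ and $\tilde b=b(\hbar,y,\eta)e^{q(\hbar,y,\eta)/\hbar}$, formula \eqref{def:symbol-prod} gives
\[
   \tilde a\circ\tilde b
   =\sum_{n=0}^\infty\frac{\hbar^n}{n!}
    \bigl(\der_\xi^n\tilde a\bigr)\bigl(\der_y^n\tilde b\bigr)\big|_{y=x,\,\eta=\xi}.
\]
Differentiating the exponential factors produces two Leibniz-type recursions. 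First, $\der_\xi^n\tilde a=A_ne^{p/\hbar}$ with $A_0=a$ and $A_{n+1}=\der_\xi A_n+\hbar^{-1}(\der_\xi p)A_n$. Second --- absorbing the weight $\hbar^n$ on purpose --- $\hbar^n\der_y^n\tilde b=\hat B_ne^{q/\hbar}$ with $\hat B_0=b$ and $\hat B_{n+1}=\hbar\der_y\hat B_n+(\der_y q)\hat B_n$. Substituting these together with $\der_y^n\tilde b=\hbar^{-n}\hat B_ne^{q/\hbar}$ and then setting $y=x$, $\eta=\xi$ collapses the two exponentials into one:
\[
   \bigl(a\,e^{p/\hbar}\bigr)\circ\bigl(b\,e^{q/\hbar}\bigr)
   =\Bigl(\,\sum_{n=0}^\infty\frac{1}{n!}\bigl(A_n\hat B_n\bigr)\big|_{y=x,\,\eta=\xi}\,\Bigr)\,e^{(p+q)/\hbar}.
\]
So the candidates are $r:=p+q$ and $c:=\sum_{n\geq0}\frac{1}{n!}(A_n\hat B_n)|_{y=x,\eta=\xi}$, and it remains to verify the orders and to check that the series for $c$ is a bona fide formal symbol.

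The order bookkeeping is the substance of the proof, and it is exactly here that the ``why are there no negative powers of $\hbar$'' phenomenon noted after the $X=x(\hbar\der)^{-1}$ example gets tamed. The key elementary observation is that $\hbar^{-1}\der_\xi p$ has $\ordh\leq0$ and $\ordx\leq-1$: since every monomial $\hbar^k\xi^m$ of $p$ satisfies $m\leq0$ and $m-k\leq0$ (from $\ordx p\leq0$ and $\ordh p\leq0$), applying $\der_\xi$ annihilates the $\xi^0$ part and turns the remaining monomials into $\hbar^k\xi^{m-1}$ with $m-1\leq-1$ and $(m-1)-k\leq-1$, after which multiplication by $\hbar^{-1}$ raises the $\hbar$-order by one. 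Combined with the fact that $\der_\xi$ itself lowers $\ordx$ and $\ordh$ each by (at least) one, induction on the recursion for $A_n$ yields $\ordx A_n\leq\ordx a-n=M-n$ and $\ordh A_n\leq\ordh a=0$. For $\hat B_n$ the relevant facts are that $\der_y$ changes neither $\ordx$ nor $\ordh$, that $\ordx(\der_y q)\leq\ordx q\leq0$ and $\ordh(\der_y q)\leq\ordh q\leq0$, and that multiplication by $\hbar$ lowers $\ordh$ by one; induction on the recursion for $\hat B_n$ then gives $\ordx\hat B_n\leq\ordx b=N$ and $\ordh\hat B_n\leq\ordh b=0$.

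With these bounds in hand, the $n$-th summand $\frac{1}{n!}(A_n\hat B_n)|_{y=x,\eta=\xi}$ of $c$ satisfies $\ordx\leq M+N-n$ and $\ordh\leq0$. The first estimate shows that only finitely many $n$ contribute to any given power of $\xi$, so $c$ is a well-defined formal symbol with $\ordx c\leq M+N$; the second gives $\ordh c\leq0$. Since only the $n=0$ term $A_0\hat B_0=ab$ reaches $\xi$-order $M+N$ (its leading coefficient being the nonzero product of the leading coefficients of $a$ and $b$), and since its top $\hbar$-order part is the nonzero product of those of $a$ and $b$, one gets $\ordx c=M+N$ and $\ordh c=0$; and $r=p+q$ has $\ordh r\leq\max(\ordh p,\ordh q)=0$ and $\ordx r\leq\max(\ordx p,\ordx q)=0$. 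The one point that must be handled with care --- and, I expect, the only real obstacle --- is keeping the $\hbar$-orders straight through the two recursions simultaneously: the weight $\hbar^n$ in \eqref{def:symbol-prod} has to be carried inside $\hat B_n$ so that $\ordh\hat B_n\leq0$, otherwise one merely gets $\ordh\hat B_n\leq n$ and the cancellation in $c$ looks miraculous rather than automatic.
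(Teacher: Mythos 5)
Your reduction to the two recursions $A_{n+1}=\der_\xi A_n+\hbar^{-1}(\der_\xi p)A_n$ and $\hat B_{n+1}=\hbar\der_y\hat B_n+(\der_y q)\hat B_n$ is algebraically correct, and so is the resulting identity with $r:=p+q$ and $c:=\sum_n\frac{1}{n!}(A_n\hat B_n)|_{y=x,\eta=\xi}$. The gap is in the order bookkeeping for $A_n$, i.e.\ exactly at the point you single out as the only real obstacle. With the paper's conventions the total symbol sends $(\hbar\der)^m\mapsto\xi^m$, so $\ordh(\hbar^k\xi^m)=-k$, not $m-k$; your inequality ``$(m-1)-k\leq-1$'' uses the wrong formula. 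In fact $\ordh\bigl(\hbar^{-1}\der_\xi p\bigr)=1+\ordh(\der_\xi p)$, which equals $+1$ whenever the $\hbar$-free part of $p$ has a nonconstant $\xi$-dependence (e.g.\ $p\ni x\xi^{-1}$ gives $\hbar^{-1}\der_\xi p\ni-\hbar^{-1}x\xi^{-2}$, of $\hbar$-order $1$). Consequently the induction only yields $\ordh A_n\leq n$, not $\ordh A_n\leq 0$, and $c$ acquires unboundedly negative powers of $\hbar$. Concretely, for $a=b=1$ and $p=q=x\xi^{-1}$ one finds
\begin{equation*}
 e^{x\xi^{-1}/\hbar}\circ e^{x\xi^{-1}/\hbar}
 =\Bigl(1-\hbar^{-1}x\xi^{-3}+\hbar^{-1}x\xi^{-5}
        +\tfrac12\hbar^{-2}x^2\xi^{-6}+\cdots\Bigr)e^{2x\xi^{-1}/\hbar},
\end{equation*}
so your $c$ violates $\ordh c\leq 0$. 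The product \emph{is} of the form $c'e^{r'/\hbar}$ with regular data, but only after the singular part of the prefactor is absorbed into the exponent ($r'=2x\xi^{-1}-x\xi^{-3}+\cdots$); proving that this absorption always produces a regular phase correction is precisely the nontrivial content of the lemma, so the argument as written is incomplete rather than merely loosely estimated.

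This is also where your route genuinely diverges from the paper's. The paper deforms the product by $\pi(t)=e^{\hbar t\der_\xi\der_y}(abe^{(p+q)/\hbar})$ and makes the ansatz $\pi=\psi e^{w/\hbar}$ with the transport equations $\der_t w=\hbar\der_\xi\der_y w+\der_\xi w\,\der_y w$ and a linear equation for $\psi$. The quadratic term $\der_\xi w\,\der_y w$ feeds the cross terms $\der_\xi p\,\der_y q$ (the very terms that produce your $\hbar^{-1}$'s) into the exponent rather than the prefactor, and the estimates $\ordx w_k\leq-k$, $\ordx\psi_k\leq M+N-k$ then give regularity of both $r=w(1)$ and $c=\psi(1)$ simultaneously. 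To repair your proof you would need to add a second step showing that your $c$ factors as $c'e^{w/\hbar}$ with $c'$, $w$ regular --- which is essentially re-proving the lemma --- so I would recommend adopting the $\psi e^{w/\hbar}$ ansatz from the start.
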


In the proof of \propref{prop:exp(:X:)=:exp(S):} and
\propref{prop::exp(S):=exp(:X:)}, we use the construction of $c$ and $r$
in the proof of \lemref{lem:prod-exp-symbol}.

\begin{proof}
 Following \cite{aok:86}, we introduce a parameter $t$ and consider
\begin{equation}
    \pi(t) = \pi(t;\hbar,x,y,\xi,\eta):=
    e^{\hbar t \der_\xi \der_y}
    a(\hbar,x,\xi) b(\hbar,y,\eta)
    e^{\bigl(p(\hbar,x,\xi) + q(\hbar,y,\eta)\bigr)/\hbar}.
\label{def:pi(t)}
\end{equation}
 If we set $t=1$, $y=x$ and $\eta=\xi$, this reduces to the operator
 product of \eqref{def:symbol-prod}. The series $\pi(t)$ satisfies a
 differential equation with respect to $t$:
\begin{equation}
    \der_t \pi = \hbar \der_\xi\der_y \pi, \qquad
    \pi(0) = a(\hbar,x,\xi) b(\hbar,y,\eta) 
    e^{\bigl(p(\hbar,x,\xi) + q(\hbar,y,\eta)\bigr)/\hbar}.
\label{diff-eq:pi}
\end{equation}
 We construct its solution in the following form:
\begin{equation}
 \begin{aligned}
    \pi(t) &= \psi(t) e^{w(t)/\hbar},
\\
    \psi(t) &= \psi(t;\hbar,x,y,\xi,\eta)
             = \sum_{n=0}^\infty \psi_n t^n,
\\
    w(t) &= w(t;\hbar,x,y,\xi,\eta) = \sum_{k=0}^\infty w_k t^k.
 \end{aligned}
\label{pi=psi*ew}
\end{equation}
 Later we set $t=1$ and prove that $\psi(1)$ and $w(1)$ is meaningful as
 a formal power series of $\xi$ and $\eta$. The differential equation
 \eqref{diff-eq:pi} is rewritten as
\begin{equation}
 \begin{split}
    &\frac{\der\psi}{\der t} + \hbar^{-1} \psi \frac{\der w}{\der t}
\\
    =&
    \hbar \der_\xi \der_y \psi 
    + \der_\xi \psi \der_y w + \der_y \psi \der_\xi w
    +
    \psi 
    \left( \der_\xi \der_y w + \hbar^{-1} \der_\xi w \der_y w\right).
 \end{split}
\label{diff-eq:psi,w}
\end{equation}
 Hence it is sufficient to construct
 $\psi(t)=\psi(t;\hbar,x,y,\xi,\eta)$ and $w(t)=w(t;\hbar,x,y,\xi,\eta)$
 which satisfy 
\begin{align}
    \frac{\der w}{\der t} &= \hbar \der_\xi \der_y w
    + \der_\xi w \der_y w,
\label{diff-eq:w}
\\
    \frac{\der \psi}{\der t}
    &= \hbar \der_\xi \der_y \psi 
     + \der_\xi \psi \der_y w + \der_y \psi \der_\xi w.
\label{diff-eq:psi}
\end{align}
 (This is a {\em sufficient} condition but not a necessary condition for
 $\pi = \psi e^w$ to be a solution of \eqref{diff-eq:pi}. The solution
 of \eqref{diff-eq:pi} is unique, but $\psi$ and $w$ satisfying
 \eqref{diff-eq:psi,w} are not unique at all.)

 To begin with, we solve \eqref{diff-eq:w} and determine
 $w(t)$. Expanding it as $w(t) = \sum_{k=0}^\infty w_k t^k$, we have
 a recursion relation and the initial condition
\begin{equation}
 \begin{split}
    w_{k+1} &= \frac{1}{k+1} \left(
    \hbar \der_\xi \der_y w_k +
    \sum_{\nu=0}^k \der_\xi w_\nu \der_y w_{k-\nu}
    \right),
\\
    w_0 &= p(x,\xi) + q(y,\eta),
 \end{split}
\label{recursion:w}
\end{equation}
 which determine $w_k=w_k(\hbar,x,y,\xi,\eta)$ inductively. Note that
 $\ordx w_0\leqq 0$ and $\der_\xi$ lowers the order by one, which
 implies
\begin{equation}
    \ordx w_k \leqq -k.
\label{ordx(wk)}
\end{equation}
 (Here $\ordx$ denotes the order with respect to $\xi$ and $\eta$.) This
 shows that $w(1) = \sum_{k=0}^\infty w_k$ makes sense as a formal
 series of $\xi$ and $\eta$. Moreover it is obvious that $w_k$ and
 $w(1)$ are formally regular with respect to $\hbar$.

 As a next step, we expand $\psi(t)$ as $\psi(t) = \sum_{k=0}^\infty
 \psi_k t^k$ and rewrite \eqref{diff-eq:psi} into a recursion relation
 and the initial condition:
\begin{equation}
 \begin{split}
    \psi_{k+1} &= \frac{1}{k+1}
    \left(
     \hbar \der_\xi \der_y \psi_k +
     \sum_{\nu=0}^k (\der_\xi \psi_\nu \der_y w_{k-\nu}
                   + \der_y \psi_\nu \der_\xi w_{k-\nu})
    \right),
\\
    \psi_0 &= a(x,\xi) b(y,\eta).
 \end{split}
\label{recursion:psi}
\end{equation}
 In this case we have estimate of the order of the terms:
\begin{equation}
     \ordx \psi_k \leqq N+M-k,
\label{ordx:psik}
\end{equation}
 which shows that the inifinite sum $\psi(1) = \sum_{k=0}^\infty \psi_k$
 makes sense. The regularity of $\psi_k$ and $\psi(1)$ is also obvious.

 Thus, we have constructed $\pi(t) = \pi(t;\hbar,x,y,\xi,\eta) =
 \psi(t;\hbar,x,y,\xi,\eta) e^{w(t;\hbar,x,y,\xi,\eta)}$, which is
 meaningful also at $t=1$. Hence the product $a(\hbar,x,\xi) \circ
 b(\hbar,x,\xi) = \pi(1; \hbar,x,x,\xi,\xi)$ is expressed in the form
 $c(\hbar,x,\xi) e^{r(\hbar,x,\xi)/\hbar}$, where
 $c(\hbar,x,\xi)=\psi(1;\hbar,x,x,\xi,\xi)$, $r(\hbar,x,\xi)=
 w(1;\hbar,x,x,\xi,\xi)/\hbar$.
\end{proof}

\begin{proof}[Proof of \propref{prop:exp(:X:)=:exp(S):}]
 We make use of differential equations satisfied by the operator
\begin{equation}
    E(s)=E(s;\hbar,x,\hbar\der) :=
    \exp\left(\frac{s}{\hbar}X(\hbar,x,\hbar\der))\right),
\label{def:E(s)}
\end{equation}
 depending on a parameter $s$. The total symbol of $E(s)$ is defined as 
\begin{equation}
    E(s;\hbar,x,\xi) 
    = \sum_{k=0}^\infty \frac{s^k}{\hbar^k k!} X^{(k)}(\hbar,x,\xi),
    \qquad
    X^{(0)}=1, \qquad X^{(k+1)} = X \circ X^{(k)}.
\label{E:symbol}
\end{equation}
 Taking the logarithm (as a function, not as an operator) of this, we
 can define $S(s)=S(s;\hbar,x,\xi)$ by
\begin{equation}
    E(s;\hbar,x,\xi) =
    e^{\hbar^{-1} S(s;\hbar,x,\xi)}.
\label{E=eS}
\end{equation}
 What we are to prove is that $S(s)$, constructed as a series, makes
 sense at $s=1$ and formally regular with respect to $\hbar$.

 Differentiating \eqref{E=eS}, we have
\begin{equation}
    X(\hbar,x,\xi) \circ E(s;\hbar,x,\xi)
    =
    \frac{\der S}{\der s} e^{S(s;\hbar,x,\xi)/\hbar}.
\label{XE=dS/dsE}
\end{equation}
 By \lemref{lem:prod-exp-symbol} ($a \mapsto X$, $b \mapsto 1$, $p
 \mapsto 0$, $q \mapsto S$) and the technique in its proof, we can
 rewrite the left hand side as follows:
\begin{equation}
    X(\hbar,x,\xi) \circ E(s;\hbar,x,\xi)
    =
    Y(s;\hbar,x,x,\xi,\xi) e^{S(s;\hbar,x,\xi)/\hbar},
\label{XE=Y*eq}
\end{equation}
 where $Y(s;\hbar,x,y,\xi,\eta) = \sum_{k=0}^\infty Y_k$ and
 $Y_k(s;\hbar,x,y,\xi,\eta)$ are defined by
\begin{equation}
 \begin{split}
    &Y_{k+1}(s;\hbar,x,y,\xi,\eta)\\ &=
    \frac{1}{k+1} (\hbar \der_\xi \der_y Y_k(s;\hbar,x,y,\xi,\eta)
    + \der_\xi Y_k(s;\hbar,x,y,\xi,\eta) \der_y S(s;\hbar,y,\eta)),
\\
    &Y_0 (s;\hbar,x,y,\xi,\eta) = X(\hbar,x,\xi).
 \end{split}
\label{recursion:Y:1}
\end{equation}
 $Y_k(s)$ corresponds to $\psi_k$ in the proof of
 \lemref{lem:prod-exp-symbol}, while $w_k$ there corresponds to
 $\delta_{k,0} S(s)$.
 On the other hand, substituting \eqref{XE=Y*eq} into the left hand side
 of \eqref{XE=dS/dsE}, we have
\begin{equation}
    \frac{\der S}{\der s}(s;\hbar,x,\xi)
    =
    Y(s;\hbar,x,x,\xi,\xi).
\label{dS/ds=Y}
\end{equation}

 We rewrite the system \eqref{recursion:Y:1} and \eqref{dS/ds=Y} in
 terms of expansion of $S(s;\hbar,x,\xi)$ and
 $Y_k(s;\hbar,x,y,\xi,\eta)$ in powers of $s$ and $\hbar$:
\begin{equation}
 \begin{split}
    S(s;\hbar,x,\xi) &= \sum_{l=0}^\infty S^{(l)}(\hbar,x,\xi) s^l
    = \sum_{l=0}^\infty \sum_{n=0}^\infty S^{(l)}_n (x,\xi) \hbar^n s^l,
\\
    Y_k(s;\hbar,x,y,\xi,\eta) 
    &= \sum_{l=0}^\infty Y^{(l)}_k(\hbar,x,y,\xi,\eta) s^l
    = \sum_{l=0}^\infty \sum_{n=0}^\infty 
    Y^{(l)}_{k,n} (x,y,\xi,\eta) \hbar^n s^l,
 \end{split}
\label{expansion:S,Y}
\end{equation}
 The coefficient of $\hbar^n s^l$ in \eqref{recursion:Y:1} is
\begin{multline}
    Y^{(l)}_{k+1,n}(x,y,\xi,\eta)
\\    =
    \frac{1}{k+1}
    \left(
    \der_\xi \der_y Y^{(l)}_{k,n-1}(x,y,\xi,\eta)
    +
    \sum_{\substack{l'+l''=l \\ n'+n''=n}}
    \der_\xi Y^{(l')}_{k,n'}(x,y,\xi,\eta)
    \der_y S^{(l'')}_{n''}(y,\eta)
    \right),
\label{recursion:Y(l,k,n)}
\end{multline}
($Y^{(l)}_{k,-1}=0$) and
\begin{equation}
    Y^{(l)}_{0,n}(x,y,\xi,\eta) = \delta_{l,0} X_n(x,\xi),
\label{Yl0n}
\end{equation}
while \eqref{dS/ds=Y} gives
\begin{equation}
    S^{(l+1)}_n(x,\xi)
    = \frac{1}{l+1} \sum_{k=0}^\infty Y^{(l)}_{k,n}(x,x,\xi,\xi).
\label{recursion:S(l+1,n)}
\end{equation}
 We first show that these recursion relations consistently determine
 $Y^{(l)}_{k,n}$ and $S^{(l)}_n$. Then we prove that the infinite sum
 in \eqref{recursion:S(l+1,n)} is finite.

 Fix $n\geqq 0$ and assume that $Y^{(l)}_{k,0},\dots,Y^{(l)}_{k,n-1}$
 and $S^{(l)}_0,\dots,S^{(l)}_{n-1}$ have been determined for all
 $(l,k)$. (When $n=0$, $Y^{(l)}_{k,-1}=0$ as mentioned above and
 $S^{(l)}_{-1}$ can be ignored as it does not appear in the induction.)

\begin{enumerate}
 \item Since $E(s=0)=1$ by the definition \eqref{def:E(s)}, we have
       $S^{(0)}=0$. Hence
\begin{equation}
    S^{(0)}_n=0.
\label{S0n=0}
\end{equation}

 \item Note that
\begin{equation}
    \ordx Y^{(0)}_{0,n} \leqq -1
\label{ordY00n<=-1}
\end{equation}
      because of \eqref{Yl0n} and the assumption  $\ord X \leqq -1$. 

\item When $l=0$, the second sum in the right hand side of the recursion
      relation \eqref{recursion:Y(l,k,n)} is absent because of
      \eqref{S0n=0}. Hence if $n\geqq k+1$, we have
\[
    Y^{(0)}_{k+1,n} = \frac{1}{k+1}\der_\xi\der_y Y^{(0)}_{k,n-1}
    = \dots = \frac{1}{(k+1)!}(\der_\xi\der_y)^{k+1} Y^{(0)}_{0,n-k-1}
    = 0, 
\]
      since $Y^{(0)}_{0,n-k-1}$  does not depend on $y$ thanks to
      \eqref{Yl0n}. If $n<k+1$, the above expression becomes zero
      by $Y^{(0)}_{k-n+1,-1}=0$. Hence together with \eqref{Yl0n}, we
      obtain 
\begin{equation}
    Y^{(0)}_{k,n}=\delta_{k,0} X_n.
\label{Y0kn}
\end{equation}

 \item By \eqref{recursion:S(l+1,n)} we can determine $S^{(1)}_n$:
\begin{equation}
    S^{(1)}_n = \sum_{k=0}^\infty Y^{(0)}_{k,n} = Y^{(0)}_{0,n} = X_n.
\label{S(1,n)=Xn}
\end{equation}

 \item Fix $l_0\geqq 1$ and assume that for all $l=0,\dots,l_0-1$ and
       for all $k=0,1,2,\dots$, we have determined $Y^{(l)}_{k,n}$ and
       that for all $l=0,\dots,l_0$ we have determined
       $S^{(l)}_{n}$. (The steps (3) and (4) are for $l_0=1$.)

       Since $S^{(0)}_{n''}=0$ by \eqref{S0n=0}, the index $l'$ in the
       right hand side of the recursion relation
       \eqref{recursion:Y(l,k,n)} (with $l=l_0$) runs essentially from
       $0$ to $l_0-1$. Hence this relation determines
       $Y^{(l_0)}_{k+1,n}$ from known quantities for all $k\geqq 0$.

       Because of the initial condition $Y_0(s;x,y,\xi,\eta)=X(x,\xi)$
       (cf.\ \eqref{recursion:Y:1}) $Y_0$ does not depend on $s$, which
       means that its Taylor coefficients $Y^{(l_0)}_{0,n}$ vanish for
       all $l_0\geqq 1$:
\begin{equation}
    Y^{(l_0)}_{0,n} = 0.
\end{equation}
       Thus we have determined all
       $Y^{(l_0)}_{k,n}$ ($k=0,1,2,\dots$).
 
 \item We shall prove below that $Y^{(l_0+1)}_{k,n}=0$ if $k>l_0+n+1$.
       Hence the sum in \eqref{recursion:S(l+1,n)} is finite and
       $S^{(l_0+1)}_n$ can be determined. The induction proceeds 
       by incrementing $l_0$ by one.
\end{enumerate}

 Let us prove that $Y^{(l)}_{k,n}$'s determined above satisfy
\begin{gather}
    Y^{(l)}_{k,n}=0, \qquad\text{if\ }k>l+n,
\label{Ylkn=0}
\\
    \ordx Y^{(l)}_{k,n} \leqq -k-l-1.
\label{ordYlkn<-k-l-1}
\end{gather}
 (We define that $\ordx 0 = -\infty$.) In particular, the sum in
 \eqref{recursion:S(l+1,n)} is well-defined and
\begin{equation}
    \ordx S^{(l+1)}_n \leqq -l-1.
\label{ordS(l+1)<-l-1}
\end{equation} 

 If $n=-1$, both \eqref{Ylkn=0} and \eqref{ordYlkn<-k-l-1} are
 obvious. Fix $n_0\geqq 0$ and assume that we have proved \eqref{Ylkn=0}
 and \eqref{ordYlkn<-k-l-1} for $n<n_0$ and all $(l,k)$.
 
 When $n=n_0$ and $l=0$, \eqref{Ylkn=0} and \eqref{ordYlkn<-k-l-1} are
 true for all $k$ because of \eqref{Y0kn} and \eqref{ordY00n<=-1}. 

 Fix $l_0\geqq 0$ and assume that we have prove \eqref{Ylkn=0} and
 \eqref{ordYlkn<-k-l-1} for $n=n_0$, $l\leqq l_0$ and all $k$. As a
 result \eqref{ordS(l+1)<-l-1} is true for $l\leqq l_0$.

 For $k=0$, \eqref{Ylkn=0} is void and \eqref{ordYlkn<-k-l-1} is true
 because of \eqref{Yl0n} and $\ord X_n \leqq  -1$.

 Put $n=n_0$ and $l=l_0+1$ in \eqref{recursion:Y(l,k,n)} and assume that
 $k+1>(l_0+1)+n_0$. Then $k>(l_0+1)+(n_0-1)$, which guarantees that
 $Y^{(l_0+1)}_{k,n_0-1}=0$ by the induction hypothesis on $n$. As we
 mentioned in the step (5) above, $l'$ in the right hand side of
 \eqref{recursion:Y(l,k,n)} runs from $0$ to $l-1=l_0$. Hence, as we are
 assuming that $k>l_0+n$, we have $k>l'+n'$, which leads to
 $Y^{(l')}_{k,n'}=0$ by the induction hypothesis on $l$ and
 $n$. Therefore all terms in the right hand side of
 \eqref{recursion:Y(l,k,n)} vanish and we have
 $Y^{(l_0+1)}_{k+1,n_0}=0$, which implies \eqref{Ylkn=0} for $n=n_0$,
 $l=l_0+1$ and $k\geqq 1$.

 The estimate \eqref{ordYlkn<-k-l-1} is easy to check for $n=n_0$,
 $l=l_0+1$ and $k\geqq 1$ by the recursion relation
 \eqref{recursion:Y(l,k,n)}. (Recall once again that $\der_\xi$ lowers
 the order by one.)

 The step $l=l_0+1$ being proved, the induction proceeds with respect to
 $l$ and consequently with respect to $n$.

\bigskip
 In summary we have constructed $Y(s;\hbar,x,y,\xi,\eta)$ and
 $S(s;\hbar,x,\xi)$ satisfying \eqref{XE=Y*eq} and
 \eqref{dS/ds=Y}. Thanks to \eqref{ordS(l+1)<-l-1},
 $S_n(x,\xi)=\sum_{l=0}^\infty S^{(l)}_n(x,\xi)$ is meaningful as a
 power series of $\xi$. Thus \propref{prop:exp(:X:)=:exp(S):} is proved.
\end{proof}

\begin{proof}[Proof of \propref{prop::exp(S):=exp(:X:)}]
 We reverse the order of the previous proof. Namely, given
 $S(\hbar,x,\xi)$, we shall construct $X(\hbar,x,\xi)$ such that the
 corresponding $S(1;\hbar,x,\xi)$ in the above proof coincides with it.

 Suppose we have such $X(\hbar,x,\xi)$. Then the above procedure
 determine $Y^{(l)}_{k,n}$ and $S^{(l)}_n$. We expand them as follows:
\begin{align*}
    &S(\hbar,x,\xi) = \sum_{n=0}^\infty S_n(x,\xi) \hbar^n, &
    &S_n(x,\xi) = \sum_{j=1}^\infty S_{n,j}(x,\xi),
\\
    &X(\hbar,x,\xi) = \sum_{n=0}^\infty X_n(x,\xi) \hbar^n, &
    &X_n(x,\xi) = \sum_{j=1}^\infty X_{n,j}(x,\xi),
\\
    &S(s;\hbar,x,\xi)
    = \sum_{l=0}^\infty \sum_{n=0}^\infty S^{(l)}_n(x,\xi) \hbar^n s^l,
    &
    &S^{(l)}_n(x,\xi)=  \sum_{j=1}^\infty S^{(l)}_{n,j}(x,\xi),
\\
    &Y_k(s;\hbar,x,y,\xi,\eta)
    & &Y^{(l)}_{k,n}(x,y,\xi,\eta)
\\
    &= \sum_{l=0}^\infty  \sum_{n=0}^\infty 
    Y^{(l)}_{k,n}(x,y,\xi,\eta) \hbar^n s^l,
    & &= \sum_{j=1}^\infty 
    Y^{(l)}_{k,n,j}(x,y,\xi,\eta).
\end{align*}
 Here terms with index $j$ are homogeneous terms of degree $-j$ with
 respect to $\xi$ and $\eta$.

 At the end of this proof we shall determine $X_n$ by \eqref{Yl0n},
\begin{equation}
    X_n(x,\xi) = Y^{(0)}_{0,n}(x,y,\xi,\eta).
\label{Xn=Y(0,0,n)}
\end{equation}
 (In particular, $Y^{(0)}_{0,n}(x,y,\xi,\eta)$ should not depend on $y$
 and $\eta$.) For this purpose, $Y^{(0)}_{0,n}$ should be determined by
\begin{equation}
    Y^{(0)}_{0,n}(x,y,\xi,\eta)
    =
    S_n(x,\xi) 
    - \sum_{\substack{(l,k)\neq(0,0)\\l,k\geq 0}}
      \frac{1}{l+1} Y^{(l)}_{k,n}(x,x,\xi,\xi)
\label{Y00n=Sn-Ylkn}
\end{equation}
 because of \eqref{recursion:S(l+1,n)} and
 $S_n(x,\xi)=\sum_{l=1}^\infty S^{(l)}_n(x,\xi)$.

 Since $\ordx Y^{(l)}_{k,n}$ should be less than $-l-k$ (cf.\
 \eqref{ordYlkn<-k-l-1}), we expect
\begin{equation}
    Y^{(l)}_{k,n,1}=0
\label{Ylkn1=0}
\end{equation}
 for $(l,k)\neq(0,0)$. Hence picking up homogeneous terms of degree $-1$
 with respect to $\xi$ from \eqref{Y00n=Sn-Ylkn}, the following equation
 should hold: 
\begin{equation}
    Y^{(0)}_{0,n,1}=S_{n,1}
\label{Y00n1=Sn1}
\end{equation}
 All $Y^{(l)}_{k,n,1}$ are determined by the above two conditions,
 \eqref{Ylkn1=0} and \eqref{Y00n1=Sn1}. Note also that 
\begin{equation}
    Y^{(l)}_{0,n,j}=0 \ \text{for\ }l\neq 0
\label{Yl0nj=0}
\end{equation}
 because $Y_0$ should not depend on $s$ because of \eqref{Yl0n}.

 Having determined initial conditions in this way, we shall determine
 $Y^{(l)}_{k,n,j}$ inductively. To this end we rewrite the recursion
 relation \eqref{recursion:Y(l,k,n)} by \eqref{recursion:S(l+1,n)} and
 pick up homogeneous terms of degree $j$:
\begin{multline}
    Y^{(l)}_{k+1,n,j}(x,y,\xi,\eta)
    =
    \frac{1}{k+1}
    \Biggl(
    \der_\xi \der_y Y^{(l)}_{k,n-1,j-1}(x,y,\xi,\eta)+
\\
    +
    \sum_{\substack{l'+l''=l,l''\geq 1\\j'+j''=j-1, n'+n''=n\\ k''\geq 0}}
    \frac{1}{l''}
    \der_\xi Y^{(l')}_{k,n',j'}(x,y,\xi,\eta)
    \der_y Y^{(l''-1)}_{k'',n'',j''}(x,x,\xi,\xi)
    \Biggr)
\label{recursion:Y(l,k,n)<-Y}
\end{multline}
 (As before, terms like $Y^{(l)}_{k,-1,j-1}$ appearing the above
 equation for $n=0$ can be ignored.)

 Fix $n_0\geqq 0$ and assume that $Y^{(l)}_{k,0,j}, \dots,
 Y^{(l)}_{k,n_0-1,j}$ are determined for all $(l,k,j)$.
\begin{enumerate}
 \item First we determine $Y^{(l)}_{k,n_0,1}$ for all $(l,k)$ by
       \eqref{Y00n1=Sn1} and \eqref{Ylkn1=0}.

 \item Fix $j_0\geqq 2$ and assume that $Y^{(l)}_{k,n_0,j}$ are
       determined for $j=1,\dots,j_0-1$ and all $(l,k)$. (The above
       step is for $j_0=2$.)

       Since all the quantities in the right hand side of the recursion
       relation \eqref{recursion:Y(l,k,n)<-Y} with $j=j_0$ are known by
       the induction hypothesis, we can determine
       $Y^{(l)}_{k,n_0,j_0}$ for $l=0,1,2,\dots$ and $k=1,2,\dots$.

 \item Together with \eqref{Yl0nj=0}, $Y^{(l)}_{0,n_0,j_0}=0$
       for $l=1,2,\dots$, we have determined all $Y^{(l)}_{k,n_0,j_0}$
       except for the case $(l,k)=(0,0)$.

 \item It follows from \eqref{recursion:Y(l,k,n)<-Y}, \eqref{Y00n1=Sn1}
       and \eqref{Yl0nj=0} by induction that for all
       $Y^{(l)}_{k,n_0,j}$ determined in (1), (2) and (3),
\begin{equation}
    Y^{(l)}_{k,n,j}=0 \ \text{for\ }l+k+1>j.
\label{Ylknj=0:l+k+1>j}
\end{equation}
       This corresponds to $\ordx Y^{(l)}_{k,n_0}\leqq -l-k-1$
       \eqref{ordYlkn<-k-l-1} in the proof of
       \propref{prop:exp(:X:)=:exp(S):}. 

 \item We determine $Y^{(0)}_{0,n_0,j_0}$ by
\begin{equation}
    Y^{(0)}_{0,n_0,j_0}
    =
    S_{n_0,j_0}
    - \sum_{\substack{(l,k)\neq(0,0)\\l,k\geq 0}}
      \frac{1}{l+1} Y^{(l)}_{k,n_0,j_0}(x,x,\xi,\xi),
\label{Y00nj=Snj-Ylknj}
\end{equation}
       which is the homogeneous part of degree $-j_0$ in
       \eqref{Y00n=Sn-Ylkn}. The sum in the right hand side is finite
       thanks to \eqref{Ylknj=0:l+k+1>j}.

 \item The induction with respect to $j$ proceeds by incrementing $j_0$.
\end{enumerate}

 Thus all $Y^{(l)}_{k,n_0,j}$ are determined and $X_{n_0}$ is determined
 by $X_{n_0}(x,\xi)=\sum_{j=1}^\infty Y^{(0)}_{0,n_0,j}$ (cf.\
 \eqref{Xn=Y(0,0,n)}), which completes the proof of
 \propref{prop::exp(S):=exp(:X:)}.
\end{proof}

\section{Asymptotics of the tau function}
\label{sec:tau-function}

In this section we derive an $\hbar$-expansion 
\begin{equation}
    \log\tau(\hbar,t) = \sum_{n=0}^\infty \hbar^{n-2} F_n(t)
\label{tau}
\end{equation}
of the tau function (cf.\ \eqref{tau/tau}) from the $\hbar$-expansion
\eqref{S:h-expansion} of the $S$-function.  Note that we have suppressed
the variable $x$, which is understood to be absorbed in $t_1$.

Let us recall the fundamental relation \cite{djkm} between the wave
function \eqref{def:wave-func} and the tau function again:
\begin{equation}
    \Psi(t;z) =
    \frac{\tau(t-\hbar[z^{-1}])}{\tau(t)}
    e^{\hbar^{-1}\zeta(t,z)},
%\label{tau/tau}
\end{equation}
where $[z^{-1}] =(1/z,1/2z^2, 1/3z^3,\dots)$ and
$\zeta(t,z)=\sum_{n=1}^\infty t_n z^n$. 
This implies that 
\begin{equation}
    \hbar^{-1} \hat S(t;z)
    =
    \left(
    e^{-\hbar D(z)} - 1
    \right) \log\tau(t)
\label{tau->w}
\end{equation}
where $\hat S(t;z) = S(t;z)- \zeta(t,z)$ and 
$D(z) =\sum_{j=1}^\infty \frac{z^{-j}}{j} \frac{\der}{\der t_j}$. 
Differentiating this with respect to $z$, we have
\begin{equation}
 \begin{split}
    \hbar^{-1} \frac{\der}{\der z} \hat S(t;z)
    =&
    - \hbar D'(z) e^{-\hbar D(z)} \log\tau(t)
\\
    =&
    - \hbar D'(z) (\hbar^{-1} \hat S(t;z) + \log\tau(t)),
 \end{split}
\label{(tau->w)'}
\end{equation}
where $D'(z):=\frac{\der}{\der z} D(z) = - \sum_{j=1}^\infty z^{-j-1}
\frac{\der}{\der t_j}$. Hence
\begin{equation}
    -\hbar D'(z) \log\tau(t)
    =
    \hbar^{-1} \left(\frac{\der}{\der z} + \hbar D'(z)\right)
    \hat S(t;z)
\label{w->tau}
\end{equation}
Multiplying $z^n$ to this equation and taking the residue, we obtain a
system of differential equations
\begin{equation}
    \hbar\frac{\der}{\der t_n} \log \tau(t)
    =
    \hbar^{-1} \Res_{z=\infty} z^n 
    \left(\frac{\der}{\der z} + \hbar D'(z)\right) \hat S(t;z)\, dz,
    \qquad
    n=1,2,\dots
\label{dtau/dtn}
\end{equation}
which is known to be integrable \cite{djkm}. 
We can thus determine the tau function $\tau(t)$, 
up to multiplication $\tau(t) \to c\tau(t)$ 
by a nonzero constant $c$, as a solution of \eqref{w->tau}.  

By substituting the $\hbar$-expansions
\begin{equation}
    \log\tau(t) = \sum_{n=0}^\infty \hbar^{n-2} F_n(t), \qquad
    \hat S(t;z) = \sum_{n=0}^\infty \hbar^n S_n(t;z),
\label{logtau,logw}
\end{equation}
into \eqref{w->tau}, we have
\begin{equation}
    \sum_{j=1}^\infty \sum_{n=0}^\infty z^{-j-1} \hbar^{n-1}
    \frac{\der F_n}{\der t_j}
    =
    \sum_{n=0}^\infty \left(
    \hbar^{n-1} \frac{\der S_n}{\der z}
    - 
    \sum_{j=1}^\infty z^{-j-1} \hbar^n \frac{\der S_n}{\der t_j}
    \right).
\label{S->F}
\end{equation}
Let us expand $S_n(t;z)$ into a power series of $z^{-1}$:
\begin{equation}
    S_n(t;z) = - \sum_{k=1}^\infty \frac{z^{-k}}{k} v_{n,k}.
\label{Sn:expand}
\end{equation}
(The notation is chosen to be consistent with our previous work, 
e.g., \cite{tak-tak:95}.) Comparing the coefficients of $z^{-j-1}
\hbar^{n-1}$ in \eqref{S->F}, we have the equations 
\begin{equation}
    \frac{\der F_n}{\der t_j}
    =
    v_{n,j} 
    + \sum_{\substack{k+l=j\\ k\geq 1, l\geq 1}}
      \frac{1}{l}\, \frac{\der v_{n-1,l}}{\der t_k}
    \qquad (v_{-1,j}=0), 
\label{dFn/dtj}
\end{equation}
which may be understood as defining equations of $F_n(t)$.  
According to what we have seen above, this system 
of differential equations is integrable 
and determines $F_n$ up to integration constants. 

\begin{rem}
\label{rem:genus-exp}
Tau functions in string theory and random matrices 
are known to have a {\em genus expansion} of the form
\begin{equation}
    \log \tau = \sum_{g=0} \hbar^{2g-2} \calF_g,
\label{genus-exp}
\end{equation}
where $\calF_g$ is the contribution from Riemann surfaces of genus $g$.
In contrast, general tau functions of the $\hbar$-dependent KP hierarchy
is not of this form, namely, odd powers of $\hbar$ can appear in the
$\hbar$-expansion of $\log\tau$. To exclude odd powers therein, we need
to impose conditions
\[
    0 =
    v_{2m+1,j} 
    + \sum_{\substack{k+l=j\\ k\geq 1, l\geq 1}}
      \frac{1}{l}\, \frac{\der v_{2m,l}}{\der t_k}
\]
 on $v_{n,j}$ or
\[
    0 =
    \frac{\der S_{2m+1}}{\der z}
    - 
    \sum_{j=1}^\infty z^{-j-1} \frac{\der S_{2m}}{\der t_j}
\]
 on $S_n$.
%
%\hfill
%$\square$
\end{rem}

\section{Concluding remarks}
\label{sec:conclusion}

We have presented a recursive construction of 
solutions of the $\hbar$-dependent KP hierarchy.  
The input of this construction is the pair $(f,g)$ 
of quantised canonical transformation. 
The main outputs are the dressing operator $W$ 
in the exponential form (\ref{W=exp(X)}), 
the wave function $\Psi$ in the WKB form (\ref{wave-func}) 
and the tau function with the quasi-classical expansion (\ref{tau}).  
Thus the $\hbar$-dependent KP hierarchy introduced 
in our previous work \cite{tak-tak:95} is no longer 
a heuristic framework for deriving the dispersionless 
KP hierarchy, but has its own raison d'\^{e}tre.  

A serious problem of our construction is that 
the recursion relations are extremely complicated.
In Appendix B, calculations are illustrated 
for the Kontsevich model \cite{kon:92}, \cite{dij:91},
\cite{adl-vaM:92}. As this example shows, 
this is by no means a practical way to construct 
a solution.  We believe that one cannot avoid 
this difficulty as far as general solutions 
are considered. 

Special solutions stemming from string theory and random matrices
\cite{mor:94}, \cite{dfgz} (e.g. the Kontsevich model) can admit a more
efficient approach such as the method of Eynard and Orantin
\cite{eyn-ora:07}.  Those methods are based on a quite different
principle.  In the method of Eynard and Orantin, it is the so called
``loop equation'' for correlation functions of random matrices.  The
loop equations amount to ``constraints'' on the tau function.  Eynard
and Orantin's ``topological recursion relations'' determine a solution
of those constraints rather than of an underlying integrable hierarchy;
it is somewhat surprising that a solution of those constraints gives a
tau function.

Lastly, let us mention that the results of this paper 
can be extended to the Toda hierarchy.  
That case will be treated in a forthcoming paper.

\appendix
\section{Proof of formulae \eqref{W=exp(Xi)exp(Xi-1)} and
 \eqref{tildeXi->Xi}}
\label{app:proof-campbell-hausdorff}

In this appendix we prove the factorisation of $W$
\eqref{W=exp(Xi)exp(Xi-1)} and an auxiliary formula
\eqref{tildeXi->Xi}. 

The main tool in this appendix is the Campbell-Hausdorff theorem: 
\begin{equation}
    \exp(X) \exp(Y)
    =
    \exp\left(
     \sum_{n=0}^\infty c_n(X, Y)
    \right),
\label{CH}
\end{equation}
where $c_n(X,Y)$ is determined recursively: 
\begin{equation}
 \begin{split}
    &c_1(X,Y) = X+Y,
\\
    &c_{n+1}(X,Y)
    =
    \frac{1}{n+1} \Biggl( 
    \frac{1}{2} [X-Y, c_n] +
\\
    &+
    \sum_{p\geq 1, 2p \leq n}
     K_{2p} \sum_{\substack{(k_1,\dots,k_{2p})\\ k_1+\cdots+k_{2p}=n}}
     [c_{k_1},[\cdots, [ c_{k_{2p}}, X+Y]\cdots]]
    \Biggr).
 \end{split}
\label{def:cn}
\end{equation}
The coefficients $K_{2p}$ are defined by \eqref{def:K2p}. See, for
example, \cite{bourbaki}.

First we prove
\begin{multline}
    \exp( \hbar^{-1} X(x,t,\hbar\der) )
\\
    =
    \exp\left(
     \hbar^{i-1} \tilde X'_i + 
     (\text{terms of $\hbar$-order $<-i+1$})
    \right)    
    \exp\left( \hbar^{-1} X^{(i-1)} \right),
\label{exp(X)=exp(Xi)exp(Xi-1)}
\end{multline}
where the principal symbol of $\tilde X'_i$ is
\begin{equation}
    \symh(\tilde X'_i) := 
    \sum_{n=1}^\infty \frac{(\ad_{\{,\}} \symh(X_0) )^{n-1}}{n!}
    \symh(X_i),
\label{def:tildeX'i}
\end{equation}
as is defined in \eqref{tildeXi->Xi}. For simplicity, let us denote
\begin{equation}
    A := \frac{1}{\hbar} X^{(i-1)}
       = \frac{1}{\hbar} \sum_{j=0}^{i-1} \hbar^j X_j, \qquad
    B := \frac{1}{\hbar} \sum_{j=i}^\infty \hbar^j X_j.
\label{def:A,B}
\end{equation}
Note that $A+B=X/\hbar$ and $\ordh A\leqq 1$, $\ordh B\leqq -i+1$. We
prove the following by induction:
\begin{equation}
    C_n := c_n(A+B,-A) =
    \frac{(\ad A)^{n-1}}{n!} (B)
    +
    (\text{terms of $\hbar$-order $<-i+1$}).
\label{cn(A+B,-A)}
\end{equation}
This is obvious for $n=1$ since $C_1=(A+B)+(-A)=B$. Assume that
\eqref{cn(A+B,-A)} is true for $n=1,\dots,N$. This means, in particular,
$\ordh C_n \leqq \ordh B \leqq 0$ ($1\leqq n \leqq N$), which implies
that for any operator $Z$ $\ordh [C_n,Z]$ is less than $\ordh Z$ by more
than one. Hence the term of the highest $\hbar$-order in the recursive
definition \eqref{def:cn} with $X=A+B$, $Y=-A$ is the first term. More
precisely, it is decomposed as
\begin{equation*}
    \frac{1}{N+1} \cdot \frac{1}{2} [(A+B)-(-A), C_N]
    =
    \frac{1}{N+1}[A,C_N] + \frac{1}{2(N+1)}[B,C_N],
\end{equation*}
and the first term in the right hand side has the highest
$\hbar$-order. By the induction hypothesis and $\ordh A \leqq 1$, we
have 
\begin{equation}
 \begin{split}
    \frac{1}{N+1}[A, C_N] &=
    \frac{1}{N+1}\left[ A,
    \frac{(\ad A)^{N-1}}{N!} (B)
    +
    (\text{terms of $\hbar$ order $<-i+1$})
    \right]
\\
    &=
    \frac{(\ad A)^N}{(N+1)!} (B)
    +
    (\text{terms of $\hbar$-order $<-i+1$}).
 \end{split}
\end{equation}
This proves \eqref{cn(A+B,-A)} for all $n$. Taking its symbol of order
$-i+1$, we have
\begin{equation}
    \symh( c_n(A+B,-A)) =
    \frac{(\ad_{\{,\}} \symh(A))^{n-1}}{n!} \symh(B),
\label{sigma(cn(A+B,-A))}
\end{equation}
which gives the terms of \eqref{def:tildeX'i}. Substituting this into
the Campbell-Hausdorff formula \eqref{CH}, we have
\eqref{exp(X)=exp(Xi)exp(Xi-1)}. 

By factorisation \eqref{exp(X)=exp(Xi)exp(Xi-1)}, we can factorise
$W=\exp(X/\hbar) (\hbar\der)^\alpha$ as follows
($\alpha^{(i-1)}:=\sum_{j=0}^{i-1}\hbar^j \alpha_j$):
\begin{equation}
 \begin{split}
   &\exp( \hbar^{-1} X(x,t,\hbar\der) ) (\hbar\der)^\alpha
\\
    =&
    \exp\left(
     \hbar^{i-1} \tilde X'_i + 
     (\text{terms of $\hbar$-order $<-i+1$})
    \right)    
    \exp\left( \hbar^{-1} X^{(i-1)} \right) \times
\\
    &\times
    \exp\left(
     \hbar^{i-1} \alpha_i \log(\hbar\der) + 
     (\text{terms of $\hbar$-order $<-i+1$})
    \right) \times
\\
    &\times
    \exp\left( \hbar^{-1} \alpha^{(i-1)} \log(\hbar\der)\right)
\\
    =&
    \exp\left(
     \hbar^{i-1} \tilde X'_i + 
     (\text{terms of $\hbar$-order $<-i+1$})
    \right)    
    \times
\\
    &\times
    \exp\left(
    e^{\ad (\hbar^{-1} X^{(i-1)})}
     \bigl(
       \hbar^{i-1} \alpha_i \log(\hbar\der) + 
      (\text{terms of $\hbar$-order $<-i+1$})
     \bigr)
    \right) \times
\\
    &\times
    \exp\left( \hbar^{-1} X^{(i-1)} \right)
    \exp\left( \hbar^{-1} \alpha^{(i-1)} \log(\hbar\der)\right).
 \end{split}
\label{W=exp(Xi)exp(Xi-1):temp}
\end{equation}
Since the symbol of order $-i+1$ of 
$
    e^{\ad (\hbar^{-1} X^{(i-1)})}
     \bigl(
       \hbar^{i-1} \alpha_i \log(\hbar\der)
     \bigr)
$
is $e^{\ad_{\{,\}} \symh(X_0)}(\alpha_i \log\xi)$,
\eqref{W=exp(Xi)exp(Xi-1):temp} is rewritten as
\eqref{W=exp(Xi)exp(Xi-1)} by using the Campbell-Hausdorff formula
\eqref{CH} once again.

\bigskip
In order to recover $X_i$ from $\tilde X_i$ (or $\tilde X'_i$), we have
only to invert the definition \eqref{def:tildeX'i}. In the definition
\eqref{def:tildeX'i} of the map $X_i\mapsto \tilde X_i'$ we substitute
$\ad_{\{,\}}(\symh(X_0))$ in the equation
\begin{equation*}
    \frac{e^t-1}{t} = \sum_{n=1}^\infty \frac{t^{n-1}}{n!}.
\end{equation*}
Hence substitution $t=\ad_{\{,\}}(\symh(X_0))$ in its inverse
\begin{equation*}
    \frac{t}{e^t-1} 
    = 1 - \frac{t}{2} + \sum_{p=1}^\infty K_{2p} t^{2p}
\end{equation*}
gives the inverse map $\tilde X'_i \mapsto X_i$. Here the coefficients
$K_{2p}$ are defined in \eqref{def:K2p}. Hence equation
\eqref{tildeXi->Xi}:
\begin{equation*}
    \symh(X_i) 
    = \symh(\tilde X'_i) 
    - \frac{1}{2} \{\symh(X_0),\symh(\tilde X'_i)\}
    + \sum_{p=1}^\infty K_{2p}
      (\ad_{\{,\}} (\symh(X_0)))^{2p} \symh(\tilde X'_i)
\end{equation*}
gives the symbol of $X_i$.
%
%\hfill
%$\square$

\section{Example (Kontsevich model)}
\label{sec:kontsevich-model}

According to Adler and van Moerbeke \cite{adl-vaM:92}, the solution of
the KP hierarchy arising in the Witten-Kontsevich theorem \cite{kon:92}
satisfies
\begin{equation}
    (L^2)_{\leq -1} = 0, \qquad
    \left(\frac12 ML^{-1} -\frac14 \hbar L^{-2} - L \right)_{\leq -1}
     = 0.
\label{konts:condition}
\end{equation}
This corresponds to the case where 
\begin{equation}
    f(\hbar,x,\hbar\der) = (\hbar\der)^2, \qquad
    g(\hbar,x,\hbar\der) = 
    \frac{1}{2} x(\hbar\der)^{-1} - \frac{\hbar}{4} (\hbar\der)^{-2}
    + \hbar\der.
\label{konts:f,g}
\end{equation}

Let us apply our procedure to this case. We fix all the time variables
$t_n$ to $0$, which means that we restrict ourselves to the so-called
``small phase space'' in topological string theory
\cite{dij:91}. (The first time variable $t_1$ can be re-introduced by
shifting $x$ to $t_1+x$.)

\bigskip
To begin with, let us determine the leading terms of $X$ and $\alpha$,
which are the initial data for our procedure.
The dispersionless limit of $(L,M)$ satisfies
\begin{equation}
    (\calL^2)_{\leq -1}=0, \qquad
    \left(\frac{1}{2} \calM\calL^{-1} - \calL\right)_{\leq -1} =0.
\label{konts:displess-cond}
\end{equation}
Since $\calL$ has the form \eqref{calL}, $\calL^2$ should be a second
order polynomial of $\xi$: $\calL^2=\xi^2 + u(x)$. When $t_n=0$, $\calM$
has the form
\[
    \calM = 
    x + \alpha_0 \calL^{-1} + \sum_{n=1}^\infty v_{0,n}\calL^{-n-1}.
\]
(See \eqref{calM}.) Using this and $\calL=\xi(1+u(x)\xi^{-2})^{1/2}$, we
have
\[
    \frac{1}{2} \calM\calL^{-1} - \calL
    =
    - \xi + \left(\frac{x}{2}-\frac{u}{2}\right)\xi^{-1}
    + \frac{\alpha_0}{2} \xi^{-2} + O(\xi^{-3}).
\]
Hence, due to the second equation of \eqref{konts:displess-cond}, we
have $u(x)=x$, $\alpha_0=0$ and, consequently,
\begin{equation}
    \calP_0 := \calL^2 = \xi^2 + x, \qquad
    \calQ_0 := \frac{1}{2} \calM\calL^{-1} - \calL
             = - \xi.
\label{konts:P0,Q0}
\end{equation}
Combining them, we have the following expression for $\calM$:
\begin{equation}
 \begin{split}
    \calM =&
    2\calL(\calL - \xi)
    = 2(\xi^2 + x) - 2\xi^2 \bigl(1 + x \xi^{-2}\bigr)^{1/2}
\\
    =&
    x - \sum_{n=2}^\infty 2 \binom{1/2}{n} x^n \xi^{-2n+2}
\\
    =&
    x
    +\frac{x^2}{4 \xi^2}
    -\frac{x^3}{8 \xi^4}
    +\frac{5 x^4}{64 \xi^6}
    -\frac{7 x^5}{128 \xi^8}
    +\frac{21 x^6}{512 \xi^{10}}
%    -\frac{33 x^7}{1024 \xi^{12}}
%\\
%    &
%    +\frac{429 x^8}{16384 \xi^{14}}
%    -\frac{715 x^9}{32768 \xi^{16}}
%    +\frac{2431 x^{10}}{131072 \xi^{18}}
%    + O(\xi^{-20})
    + O(\xi^{-12})
 \end{split}
\label{konts:calM}
\end{equation}

On the other hand, from the expression $\calM = \exp(\ad_{\{,\}} X_0)x$
follows 
\begin{equation}
 \begin{split}
    \calM =&
    \sum_{n=0}^\infty \frac{\ad_{\{,\}} X_0}{n!} x
\\
    =&
    x - \chi_{0,1} \xi^{-2} - 2 \chi_{0,2} \xi^{-3}
\\
    &+
    \left( -3\chi_{0,3}-\frac{\chi_{0,1}\chi'_{0,1}}{2} \right)\xi^{-4} 
    +
    ( -4\chi_{0,4}-2\chi_{0,2}\chi'_{0,1}) \xi^{-5} + \cdots,
 \end{split}
\label{konts:calM=exp(ad)x}
\end{equation}
where $\chi_{0,k}$ are the coefficients in the expansion of $X_0$
\eqref{def:Xi}. Comparing \eqref{konts:calM} and
\eqref{konts:calM=exp(ad)x}, we can determine $\chi_{0,k}$ inductively
and hence $X_0$ is determined:
\begin{equation*}
 \begin{split}
    X_0=&
    -\frac{x^2}{4} (\hbar\der)^{-1}
    +\frac{x^3}{48} (\hbar\der)^{-3}
    -\frac{x^4}{384}(\hbar\der)^{-5}
\\
    &
    +\frac{x^6}{6144}(\hbar\der)^{-9}
    -\frac{x^7}{61440}(\hbar\der)^{-11}
%    -\frac{7 x^8}{163840}(\hbar\der)^{13}
%\\
%    &
%    +\frac{523 x^9}{41287680}(\hbar\der)^{-15}
%    +\frac{779 x^{10}}{33030144}(\hbar\der)^{-17}
    +\cdots.
 \end{split}
\end{equation*}

\bigskip
Having determined $X_0$ and $\alpha_0$, we can start the algorithm
discussed in \secref{sec:recursion}. In the step 1 for $i=1$ we define
$P^{(0)}$ and $Q^{(0)}$ by \eqref{def:Pi-1} and \eqref{def:Qi-1}:
\begin{align*}
    P^{(0)} &=
    (\hbar\der)^2+x
    +\frac{\hbar}{2}(\hbar\der)^{-1}
    -\frac{\hbar x}{4}(\hbar\der)^{-3}
    +\frac{\hbar^2}{8}(\hbar\der)^{-4}
    +\frac{5 \hbar x^2}{32}(\hbar\der)^{-5}
\\
    &
    -\frac{41 \hbar^2 x}{96}(\hbar\der)^{-6}
    +\left(\frac{23 \hbar^3}{96}
          -\frac{3 \hbar x^3}{32}\right)(\hbar\der)^{-7}
    +\frac{301 \hbar^2 x^2}{384}(\hbar\der)^{-8}
\\
    &
    +\left(-\frac{83 \hbar^3 x}{48}
           +\frac{53 \hbar x^4}{1024}\right)(\hbar\der)^{-9}
    +\left( \frac{191 \hbar^4}{192}
           -\frac{543 \hbar^2 x^3}{512}\right)(\hbar\der)^{-10}
\\
    &
    +\left( \frac{8783 \hbar^3 x^2}{1536}
           -\frac{119 \hbar x^5}{4096}\right)(\hbar\der)^{-11}
%%% commented out %%%
\commentout{
    +\left(-\frac{4375 \hbar^4 x}{384}
           +\frac{28771 \hbar^2 x^4}{24576}\right)(\hbar\der)^{-12}
\\
    &
    +\left(\frac{1271 \hbar^5}{192}-\frac{9763 \hbar^3 x^3}{768}
          +\frac{2215 \hbar x^6}{98304}\right)(\hbar\der)^{-13}
\\
    &
    +\left(\frac{32429 \hbar^4 x^2}{576}
          -\frac{37541 \hbar^2 x^5}{32768}\right) (\hbar\der)^{-14}
\\
    &
    +\left(-\frac{972743 \hbar^5 x}{9216}
           +\frac{2136313 \hbar^3 x^4}{98304}
           -\frac{2377 \hbar x^7}{98304}\right)(\hbar\der)^{-15}
\\
    &
    +\left(\frac{283867 \hbar^6}{4608}
          -\frac{16476523 \hbar^4 x^3}{92160}
          +\frac{1348327 \hbar^2 x^6}{1179648}\right)(\hbar\der)^{-16}
\\
    &
    +\left(\frac{8670875 \hbar^5 x^2}{12288}
          -\frac{3053023 \hbar^3 x^5}{98304}
          +\frac{74129 \hbar x^8}{3145728}\right)(\hbar\der)^{-17}
}
%%%%%%%%%%%%%%%%%%%%%%
    + \cdots,
\\
    Q^{(0)} &= -(\hbar\der)
    -\frac{\hbar}{4} (\hbar\der)^{-2}
    +\frac{3 \hbar x}{8} (\hbar\der)^{-4}
    -\frac{3 \hbar^2}{8} (\hbar\der)^{-5}
    -\frac{29 \hbar x^2}{64} (\hbar\der)^{-6}
\\
    &
    +\frac{157 \hbar^2 x}{96} (\hbar\der)^{-7}
    +\left(-\frac{49 \hbar^3}{32}+\frac{\hbar x^3}{2}\right)(\hbar\der)^{-8}
    -\frac{519 \hbar^2 x^2}{128} (\hbar\der)^{-9}
\\
    &
    +\left(\frac{4345 \hbar^3 x}{384}
          -\frac{1077 \hbar x^4}{2048}\right)(\hbar\der)^{-10}
    +\left(-\frac{1339 \hbar^4}{128}
           +\frac{3961 \hbar^2 x^3}{512}\right)(\hbar\der)^{-11}
%%% commented out %%%
\commentout{
\\
    &
    +\left(-\frac{5571}{128} \hbar^3 x^2
           +\frac{4427 \hbar x^5}{8192}\right)(\hbar\der)^{-12}
    +\left(\frac{14021 \hbar^4 x}{128}
           -\frac{38917 \hbar^2 x^4}{3072}\right)(\hbar\der)^{-13}
\\
    &
    +\left(-\frac{12939 \hbar^5}{128}
           +\frac{123345 \hbar^3 x^3}{1024}
           -\frac{108463 \hbar x^6}{196608}\right)(\hbar\der)^{-14}
\\
    &
    +\left(-\frac{10626799 \hbar^4 x^2}{18432}
           +\frac{1849331 \hbar^2 x^5}{98304}\right)(\hbar\der)^{-15}
\\
    &
    +\left(\frac{12647275 \hbar^5 x}{9216}
          -\frac{26731385 \hbar^3 x^4}{98304}
          +\frac{13855 \hbar x^7}{24576}\right)(\hbar\der)^{-16}
\\
    &
    +\left(-\frac{1298533 \hbar^6}{1024}
           +\frac{129676111 \hbar^4 x^3}{61440}
           -\frac{30900365 \hbar^2 x^6}{1179648}\right)(\hbar\der)^{-17}
\\
    &
    +\left(-\frac{3394224569 \hbar^5 x^2}{368640}
           +\frac{210212819 \hbar^3 x^5}{393216}
           -\frac{1207003 \hbar x^8}{2097152}\right)(\hbar\der)^{-18}
}
%%%%%%%%%%%%%%%%%%%%%%
\\
    &+ \cdots.
\end{align*}
We extract terms (symbols) of $\hbar$-order $0$ from the
$\hbar$-expansion of them:
\begin{equation*}
    \calP_0(x,\xi)=\xi^2+x, \qquad
    \calQ_0(x,\xi)=-\xi,
\end{equation*}
and those of $\hbar$-order $-1$:
\begin{align*}
    \calP^{(0)}_1(x,\xi)=&
    \frac{1}{2 }\xi^{-1}
    -\frac{x}{4 }\xi^{-3}
    +\frac{5 x^2}{32 }\xi^{-5}
    -\frac{3 x^3}{32 }\xi^{-7}
    +\frac{53 x^4}{1024 }\xi^{-9}
\\  &
    -\frac{119 x^5}{4096 }\xi^{-11}
%%% commented out %%%
\commentout{
    +\frac{2215 x^6}{98304 }\xi^{-13}
    -\frac{2377 x^7}{98304 }\xi^{-15}
    +\frac{74129 x^8}{3145728 } \xi^{-17}
}
%%%%%%%%%%%%%%%%%%%%%%
    + \cdots
\\
    \calQ^{(0)}_1(x,\xi)=&
    -\frac{1}{4} \xi^{-2}
    +\frac{3 x}{8} \xi^{-4}
    -\frac{29x^2}{64} \xi^{-6}
    +\frac{x^3}{2} \xi^{-8}
    -\frac{1077 x^4}{2048} \xi^{-10}
%%% commented out %%%
\commentout{
    +\frac{4427 x^5}{8192} \xi^{-12}
\\  &
    -\frac{108463 x^6}{196608} \xi^{-14}
    +\frac{13855 x^7}{24576} \xi^{-16}
    -\frac{1207003 x^8}{2097152} \xi^{-18}
}
%%%%%%%%%%%%%%%%%%%%%
    + \cdots.
\end{align*}
Then, following \eqref{ai+tildeXi=int}, we determine $\alpha_0$ and
$\calX_0$ by
\begin{equation*}
 \begin{split}
    \alpha_1 \log\xi + \tilde\calX_1:=&
    \int\left(
    \dfrac{\der \calQ_0}{\der \xi} \calP^{(0)}_1
    -
    \dfrac{\der \calP_0}{\der \xi} \calQ^{(0)}_1
    \right)_{\leq -1}
    d\xi
\\
    =&
    \frac{x}{4} \xi^{-2}
    -\frac{3 x^2}{16} \xi^{-4}
    +\frac{29 x^3}{192} \xi^{-6}
    -\frac{x^4}{8} \xi^{-8}
    +\frac{1077 x^5}{10240} \xi^{-10}
%%% commented out %%%
\commentout{
\\  &
    -\frac{4427 x^6}{49152} \xi^{-12}
    +\frac{108463 x^7}{1376256} \xi^{-14}
    -\frac{13855 x^8}{196608} \xi^{-16}
}
%%%%%%%%%%%%%%%%%%%%%
    + \cdots.
 \end{split}
\end{equation*}
Since $\log$ term is absent, $\alpha_1=0$ and the above expression is
$\tilde\calX_1$ itself, which is also equal to $\tilde\calX'_1$ defined
by \eqref{tildeXi->Xi:symbol}. Then we can compute $X_1$ by the formulae
\eqref{tildeXi->Xi:symbol} and \eqref{def:Xi}:
\begin{equation*}
 \begin{split}
    X_1=&
    \frac{x}{4} (\hbar\der)^{-2}
    -\frac{3 x^2}{32} (\hbar\der)^{-4}
    +\frac{5 x^3}{192} (\hbar\der)^{-6}
%\\  &
    -\frac{9 x^5}{2048} (\hbar\der)^{-10}
%    +\frac{77 x^6}{122880} (\hbar\der)^{-12}
%    + \frac{91 x^7}{40960} (\hbar\der)^{-14}
%\\  &
%    -\frac{1569 x^8}{1835008} (\hbar\der)^{-16}
%    -\frac{968209 x^9}{14680064} (\hbar\der)^{-18}
%    -\frac{68736707 x^{10}}{440401920} (\hbar\der)^{-20}
    + \cdots
\end{split}
\end{equation*}
We can repeat the procedure Step 1, 2, 3 for $i=2$ again. The results
are
\begin{equation*}
 \begin{split}
    P^{(1)}
    =&
    (\hbar\der)^2+x 
    +\frac{3 \hbar^2}{16} (\hbar\der)^{-4}
    -\frac{23 \hbar^2 x}{96 } (\hbar\der)^{-6}
    +\frac{9 \hbar^3}{16 } (\hbar\der)^{-7}
    +\frac{19 \hbar^2 x^2}{384 } (\hbar\der)^{-8}
\\  &
    -\frac{707 \hbar^3 x}{384 } (\hbar\der)^{-9}
    +\left(\frac{861 \hbar^4}{256}
          +\frac{155 \hbar^2 x^3}{512}\right) (\hbar\der)^{-10}
    +\frac{2145 \hbar^3 x^2}{1024 }(\hbar\der)^{-11}
%%% commented out %%%
\commentout{
\\  &
    +\left(-\frac{607 \hbar^4 x}{32}
           -\frac{13805 \hbar^2 x^4}{24576}\right) (\hbar\der)^{-12}
    +\left(\frac{15207 \hbar^5}{512}
          +\frac{969 \hbar^3 x^3}{512}\right) (\hbar\der)^{-13}
\\  &
    +\left(\frac{1554395 \hbar^4 x^2}{36864}
          +\frac{15527 \hbar^2 x^5}{32768}\right) (\hbar\der)^{-14}
\\  &
    +\left(-\frac{4521953 \hbar^5 x}{18432}
           -\frac{44949 \hbar^3 x^4}{4096}\right) (\hbar\der)^{-15}
\\  &
    +\left(\frac{89095 \hbar^6}{256}
          -\frac{19150591 \hbar^4 x^3}{1474560}
          -\frac{117707 \hbar^2 x^6}{1179648}\right) (\hbar\der)^{-16}
\\  &
    +\left(\frac{136733991 \hbar^5 x^2}{163840}
      +\frac{7393931 \hbar^3 x^5}{393216}
      +\frac{1207003 \hbar x^8}{1048576}\right) (\hbar\der)^{-17}
}
%%%%%%%%%%%%%%%%%%%%
\\
    &
    + \cdots,
\\
    Q^{(1)}=&
    - \hbar\der
    -\frac{\hbar^2}{4 } (\hbar\der)^{-5}
    +\frac{143 \hbar^2 x}{192 } (\hbar\der)^{-7}
    -\frac{611 \hbar^3}{384 } (\hbar\der)^{-8}
    -\frac{85 \hbar^2 x^2}{64 } (\hbar\der)^{-9}
\\  &
    +\frac{2205 \hbar^3 x}{256} (\hbar\der)^{-10}
    +\left(-\frac{1795 \hbar^4}{128}
           +\frac{1885 \hbar^2 x^3}{1024}\right) (\hbar\der)^{-11}
%%% commented out %%%
\commentout{
\\  &
    -\frac{12975 \hbar^3 x^2}{512} (\hbar\der)^{-12}
    +\left(\frac{86081 \hbar^4 x}{768}
          -\frac{110605 \hbar^2 x^4}{49152}\right) (\hbar\der)^{-13}
\\  &
    +\left(-\frac{498661 \hbar^5}{3072}
           +\frac{1332949 \hbar^3 x^3}{24576}\right) (\hbar\der)^{-14}
\\  &
    +\left(-\frac{68882177 \hbar^4 x^2}{147456}
           +\frac{528565 \hbar^2 x^5}{196608}\right) (\hbar\der)^{-15}
\\  &
    +\left(\frac{28350877 \hbar^5 x}{16384}
          -\frac{12602615 \hbar^3 x^4}{131072}\right) (\hbar\der)^{-16}
\\  &
    +\left(-\frac{43208869 \hbar^6}{18432}
           +\frac{336110953 \hbar^4 x^3}{245760}
           -\frac{488173 \hbar^2 x^6}{147456}\right) (\hbar\der)^{-17}
\\  &
    +\left(-\frac{3083193377 \hbar^5 x^2}{327680}
           +\frac{121401941 \hbar^3 x^5}{786432}
           -\frac{1207003 \hbar x^8}{2097152}\right) (\hbar\der)^{-18}
}
%%%%%%%%%%%%%%%%%%%%
    + \cdots.
 \end{split}
\end{equation*}
Collecting the terms of $\hbar$-order $-2$, we have
\begin{align*}
    \calP^{(1)}_2 =&
    \frac{3}{16 }\xi^{-4}
    -\frac{23 x}{96} \xi^{-6}
    +\frac{19 x^2}{384} \xi^{-8}
%\\  &
    +\frac{155 x^3}{512} \xi^{-10}
%    -\frac{13805 x^4}{24576} \xi^{-12}
%    +\frac{15527 x^5}{32768} \xi^{-14}
%    -\frac{117707 x^6}{1179648} \xi^{-16}
    + \cdots,
\\
    \calQ^{(1)}_2 =&
    -\frac{1}{4} \xi^{-5}
    +\frac{143 x}{192} \xi^{-7}
    -\frac{85 x^2}{64} \xi^{-9}
%\\  &
    +\frac{1885 x^3}{1024} \xi^{-11}
%    -\frac{110605 x^4}{49152} \xi^{-13}
%    +\frac{528565 x^5}{196608} \xi^{-15}
%    -\frac{488173 x^6}{147456} \xi^{-17}
    + \cdots.
\end{align*}
From this result follows
\begin{equation*}
 \begin{split}
    \alpha_2 \log\xi + \tilde\calX_2:=&
    \int\left(
    \dfrac{\der \calQ_0}{\der \xi} \calP^{(1)}_2
    -
    \dfrac{\der \calP_0}{\der \xi} \calQ^{(1)}_2
    \right)_{\leq -1}
    d\xi
\\
    =&
    -\frac{5}{48} \xi^{-3}
    +\frac{x}{4} \xi^{-5}
    -\frac{143 x^2}{384} \xi^{-7}
%\\  &
    +\frac{85 x^3}{192} \xi^{-9}
    -\frac{1885 x^4}{4096} \xi^{-11}
%    +\frac{22121 x^5}{49152} \xi^{-13}
%    -\frac{528565 x^6}{1179648} \xi^{-15}
    +\cdots.
 \end{split}
\end{equation*}
This means $\alpha_2=0$ and $\tilde \calX'_2$ is equal to $\tilde
\calX_2$, which is equal to the above expression. Substituting these
results in \eqref{tildeXi->Xi:symbol}, we have
\begin{equation*}
 \begin{split}
    X_2 =&
    -\frac{5}{48} (\hbar\der)^{-3}
    +\frac{11 x}{64} (\hbar\der)^{-5}
%\\  &
    -\frac{85 x^2}{768 } (\hbar\der)^{-7}
    +\frac{435 x^4}{8192 } (\hbar\der)^{-11}
%    -\frac{539 x^5}{49152 } (\hbar\der)^{-13}
%    -\frac{26117 x^6}{491520 } (\hbar\der)^{-15}
%    -\frac{3684155 x^7}{8257536 } (\hbar\der)^{-17}
%\\  &
%    -\frac{29697945 x^8}{29360128 } (\hbar\der)^{-19}
%    -\frac{139187819 x^9}{264241152 } (\hbar\der)^{-21}
    + \cdots.
 \end{split}
\end{equation*}
Substituting this into \eqref{def:Pi-1} and \eqref{def:Qi-1} for $i=3$,
we have 
\begin{equation*}
 \begin{split}
    P^{(2)}=&
    (\hbar\der)^2+x
    +\frac{5 \hbar^3}{48} (\hbar\der)^{-7}
    +\frac{425 \hbar^3 x}{768 }(\hbar\der)^{-9}
    +\frac{3205 \hbar^4}{4096 }(\hbar\der)^{-10}
\\  &
    -\frac{5865 \hbar^3 x^2}{2048 }(\hbar\der)^{-11}
%    +\frac{89975 \hbar^4 x}{16384 }(\hbar\der)^{-12}
%\\  &
%    +\left(\frac{77005\hbar^5}{8192}
%          +\frac{31183 \hbar^3 x^3}{6144}\right)(\hbar\der)^{-13}
%    -\frac{4883399 \hbar^4 x^2}{98304} (\hbar\der)^{-14}
%\\  &
%    +\left(\frac{1899099 \hbar^5 x}{32768}
%          -\frac{418877 \hbar^3x^4}{196608}\right)(\hbar\der)^{-15}
%\\  &
%    +\left(\frac{41726005 \hbar^6}{294912}
%          +\frac{115990901 \hbar^4 x^3}{786432}
%          +\frac{488173 \hbar^2 x^6}{73728}\right)(\hbar\der)^{-16}
    + \cdots,
\\
    Q^{(2)}=&
    -(\hbar\der)
    -\frac{155 \hbar^3}{384} (\hbar\der)^{-8}
    +\frac{685 \hbar^3 x}{512} (\hbar\der)^{-10}
    -\frac{41395 \hbar^4}{8192} (\hbar\der)^{-11}
%\\  &
%    -\frac{7835 \hbar^3 x^2}{4096} (\hbar\der)^{-12}
%    +\frac{883395 \hbar^4 x}{32768} (\hbar\der)^{-13}
%\\  &
%    +\left(-\frac{4623215 \hbar^5}{65536}
%           +\frac{15827 \hbar^3 x^3}{12288}\right) (\hbar\der)^{-14}
%    -\frac{25445303 \hbar^4 x^2}{393216} (\hbar\der)^{-15}
%\\  &
%    +\left(\frac{208225739 \hbar^5 x}{393216}
%          -\frac{467435 \hbar^3 x^4}{393216}\right)(\hbar\der)^{-16}
%\\  &
%    +\left(-\frac{5456961649 \hbar^6}{4718592}
%           +\frac{142201609 \hbar^4 x^3}{1572864}
%           -\frac{488173 \hbar^2 x^6}{147456}\right) (\hbar\der)^{-17}
    + \cdots.
 \end{split}
\end{equation*}
By extracting the terms of $\hbar$-order $-3$, we have
\begin{equation}
 \begin{split}
    \calP^{(2)}_3=&
    \frac{5}{48} \xi^{-7}
    +\frac{425 x}{768} \xi^{-9}
    -\frac{5865 x^2}{2048} \xi^{-11}
%\\  &
%    +\frac{31183 x^3}{6144} \xi^{-13}
%    -\frac{418877 x^4}{196608} \xi^{-15}
    + \cdots,
\\
    Q^{(2)}_3=&
    -\frac{155}{384} \xi^{-8}
    +\frac{685 x}{512} \xi^{-10}
%    -\frac{7835 x^2}{4096} \xi^{-12}
%\\  &
%    +\frac{15827 x^3}{12288} \xi^{-14}
%    -\frac{467435 x^4}{393216} \xi^{-16}
    + \cdots.
 \end{split}
\end{equation}
Hence,
\begin{equation*}
 \begin{split}
    \alpha_3 \log\xi + \tilde\calX_3:=&
    \int\left(
    \dfrac{\der \calQ_0}{\der \xi} \calP^{(2)}_3
    -
    \dfrac{\der \calP_0}{\der \xi} \calQ^{(2)}_3
    \right)_{\leq -1}
    d\xi
\\
    =&
    -\frac{15}{128} \xi^{-6}
    +\frac{155 x}{384} \xi^{-8}
    -\frac{685 x^2}{1024} \xi^{-10}
%    +\frac{7835 x^3}{12288} \xi^{-12}
%    -\frac{15827 x^4}{49152} \xi^{-14}
    + \cdots,
 \end{split}
\end{equation*}
which implies $\alpha_3 = 0$ and $\tilde X_3 = \tilde X'_3$ is equal to
the above expression. Thus, again by \eqref{tildeXi->Xi:symbol}, we have
\begin{equation*}
 \begin{split}
    X_3 =&
    -\frac{15}{128} (\hbar\der)^{-6}
    +\frac{175 x}{768} (\hbar\der)^{-8}
%    -\frac{1485 x^3}{4096} (\hbar\der)^{-12}
%    +\frac{11011 x^4}{98304} (\hbar\der)^{-14}
%\\  &
%    +\frac{204629 x^5}{393216} (\hbar\der)^{-16}
%    -\frac{1827581 x^6}{4718592} (\hbar\der)^{-18}
%    -\frac{53892721 x^7}{44040192} (\hbar\der)^{-20}
%\\  &
%    +\frac{17686589 x^8}{50331648} (\hbar\der)^{-22}
%    +\frac{3318941731 x^9}{1509949440} (\hbar\der)^{-24}
    + \cdots
 \end{split}
\end{equation*}
Consequently, application of 
$
    \exp(\Ad((X_0+\hbar X_1 + \hbar^2 X_2 + \hbar^3 X_3)/\hbar))
$
to $f$ and $g$ gives
\begin{equation*}
 \begin{split}
    P^{(3)} =&
    (\hbar\der)^2+x
    -\frac{3395 \hbar^4}{4096} (\hbar\der)^{-10}
%    +\frac{119135 \hbar^4 x}{16384} (\hbar\der)^{-12}
%\\  &
%    -\frac{100005 \hbar^5}{8192} (\hbar\der)^{13}
%    -\frac{1370951 \hbar^4 x^2}{98304} (\hbar\der)^{-14}
%    \pm\left(\frac{7376153\hbar^5 x}{49152}
%         +\frac{467435 \hbar^3 x^4}{196608}\right)(\hbar\der)^{-15}
    + \cdots
\\
    Q^{(3)} =& 
    -(\hbar\der)
    -\frac{3395 \hbar^4}{8192}(\hbar\der)^{-11}
%    -\frac{44445 \hbar^4 x}{32768} (\hbar\der)^{-13}
%\\  &
%    -\frac{451815 \hbar^5}{65536} (\hbar\der)^{-14}
%    +\frac{4475257 \hbar^4 x^2}{393216} (\hbar\der)^{-15}
%    +\left(-\frac{5001213\hbar^5 x}{131072}
%           -\frac{467435 \hbar^3 x^4}{393216}\right)(\hbar\der)^{-16}
    + \cdots,
 \end{split}
\end{equation*}
which are differential operators up to $\hbar$-order $-3$.

\end{document}